\newif\iflipics

\iflipics

\documentclass[a4paper,UKenglish,cleveref, autoref, thm-restate]{lipics-v2021}

\title{SETH-based Lower Bound for Dynamic Degeneracy} 
\titlerunning{SETH-based Lower Bound for Dynamic Degeneracy} 

\author{Konrad Majewski}{Institute of Informatics, University of Warsaw, Poland}{k.majewski@mimuw.edu.pl}{}{}
\author{Micha\l{} Pilipczuk}{Institute of Informatics, University of Warsaw, Poland}{michal.pilipczuk@mimuw.edu.pl}{}{}

\authorrunning{K. Majewski, Mi. Pilipczuk} 

\Copyright{Konrad Majewski, Micha\l{} Pilipczuk} 

\ccsdesc{Theory of computation~Data structures design and analysis}

\keywords{dynamic data structures, graph degeneracy, boolean circuits} 

\funding{\flag[0.17\textwidth]{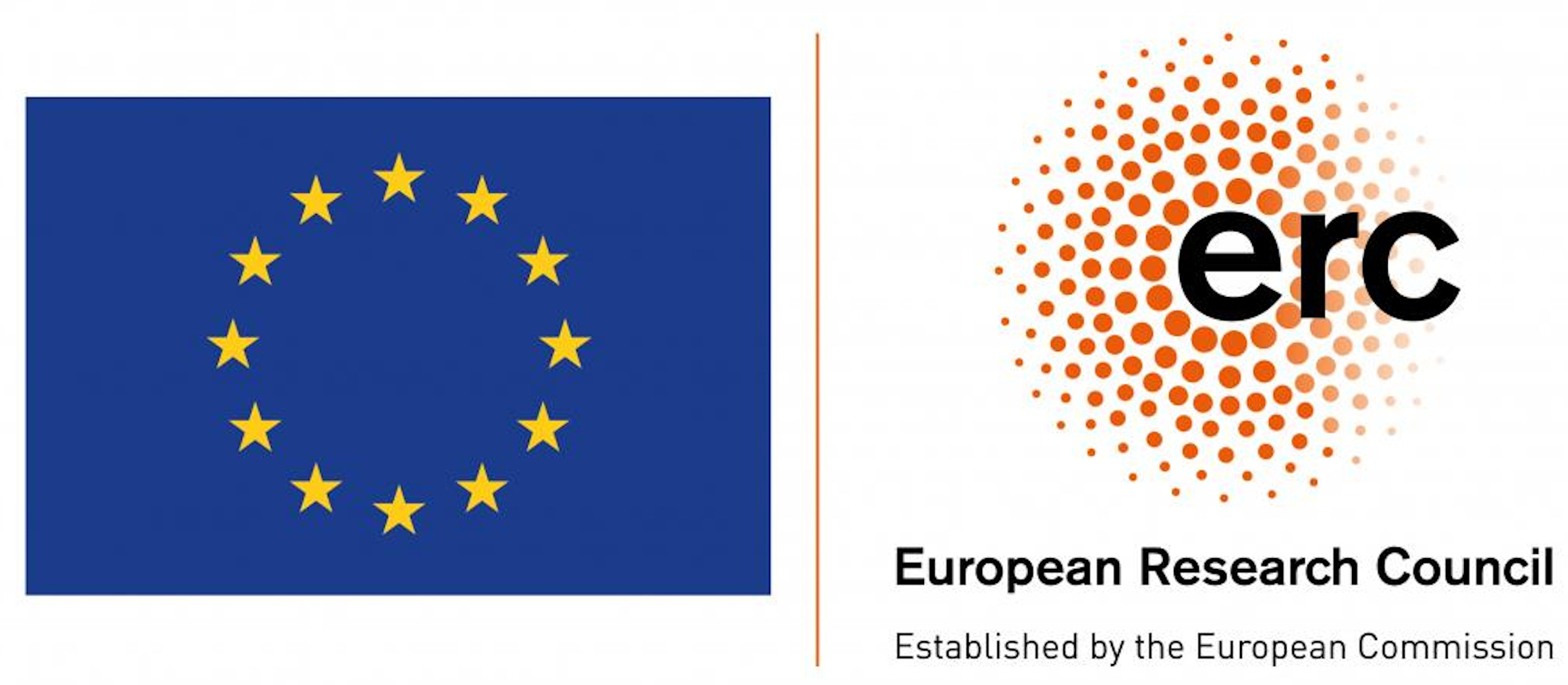}The work of both authors is a part of project BOBR that has received funding from the European Research Council (ERC) under the European Union’s Horizon 2020 research and innovation programme (grant agreement No. 948057).}

\EventEditors{John Q. Open and Joan R. Access}
\EventNoEds{2}
\EventLongTitle{42nd Congress of Visibly Introvert Theoreticians (CVIT 2016)}
\EventShortTitle{CVIT 2016}
\EventAcronym{CVIT}
\EventYear{2016}
\EventDate{December 24--27, 2016}
\EventLocation{Little Whinging, United Kingdom}
\EventLogo{}
\SeriesVolume{42}
\ArticleNo{23}

\else 

\documentclass[11pt]{article}
\usepackage{a4wide}
\usepackage[utf8]{inputenc}
\usepackage[T1]{fontenc}

\title{SETH-based Lower Bound for Dynamic Degeneracy\thanks{The work of both authors is a part of project BOBR that has received funding from the European Research Council (ERC) under the European Union’s Horizon 2020 research and innovation programme (grant agreement No. 948057).}
}
\author{Konrad Majewski\thanks{Institute of Informatics, University of Warsaw, Poland (\texttt{k.majewski@mimuw.edu.pl})} \and
Michał Pilipczuk\thanks{Institute of Informatics, University of Warsaw, Poland (\texttt{michal.pilipczuk@mimuw.edu.pl})}}
\date{}

\fi

\usepackage{inconsolata}
\usepackage{libertine}

\usepackage{amsthm}
\usepackage{amssymb}
\usepackage{amsmath}
\usepackage{bbm} 
\usepackage{mathtools}
\usepackage{mathrsfs}
\usepackage{stmaryrd} 
\usepackage{thmtools,thm-restate}

\usepackage{tikz}
\usetikzlibrary{fit, positioning, calc, backgrounds}
\usepackage{graphicx}
\graphicspath{ {./images/} }

\usepackage{algorithm}
\usepackage[noend]{algpseudocode}
\usepackage{caption}
\usepackage{subcaption}
\usepackage[noadjust]{cite}
\usepackage{tcolorbox}
\usepackage{textpos}
\usepackage{todonotes}
\usepackage{xcolor}
\usepackage{xfrac}
\usepackage{xparse}
\usepackage{xspace}

\iflipics
\else 
\usepackage[colorlinks = true,linkcolor = blue,urlcolor = blue, citecolor = blue]{hyperref}
\usepackage[capitalize, nameinlink]{cleveref}
\fi

\iflipics
\else 
\newtheorem{theorem}{Theorem}[section]

\newtheorem{lemma}[theorem]{Lemma}

\crefname{claim}{Claim}{Claims}
\newtheorem{claim}[theorem]{Claim}
\newcommand{\cqed}{\ensuremath{\lhd}}
\makeatletter
\newenvironment{claimproof}{\par
	\pushQED{\cqed}%
	\normalfont \topsep6\p@\@plus6\p@\relax
	\trivlist
	\item\relax
	{\itshape
		Proof of the claim\@addpunct{.}}\hspace\labelsep\ignorespaces
}{%
	\hfill\popQED\endtrivlist\@endpefalse
}
\makeatother
\fi

\newtheorem{fact}[theorem]{Fact}

\newcommand{\Inline}[1]{#1\xspace}

\newcommand{\ceil}[1]{\left\lceil #1 \right\rceil}

\newcommand{\ProblemName}[1]{\Inline{\textsc{#1}}}
\newcommand{\ov}{\ProblemName{Orthogonal Vectors}}
\newcommand{\dyncir}{\ProblemName{Dynamic Circuit Evaluation}}

\newcommand{\dyndegany}[2]{\ProblemName{Dynamic \ensuremath{(#1, #2)}-Approximate Degeneracy}}
\newcommand{\dyndeg}[2]{\ProblemName{Dynamic \ensuremath{(2-#1, #2)}-Approximate Degeneracy}}

\newcommand{\yesinstance}{\Inline{\textsc{Yes}-instance}}
\newcommand{\noinstance}{\Inline{\textsc{No}-instance}}

\newcommand{\oper}[1]{\mathtt{#1}}

\newcommand{\Oh}{\mathcal{O}}

\newcommand{\N}{\mathbb{N}}

\newcommand{\Z}{\mathbb{Z}}

\newcommand{\eps}{\varepsilon}

\renewcommand{\leq}{\leqslant}
\renewcommand{\geq}{\geqslant}

\renewcommand{\le}{\leqslant}
\renewcommand{\ge}{\geqslant}

\newcommand{\scalar}[2]{\langle #1, #2 \rangle}

\DeclareMathOperator{\arb}{arb}
\DeclareMathOperator{\mad}{mad}
\DeclareMathOperator{\ort}{ort}

\newcommand{\mx}{\mathrm{max}}

\newcommand{\dsfont}[1]{\ensuremath{\mathsf{#1}}}
\newcommand{\Dcirc}{\dsfont{DCircuit}}
\newcommand{\Dovtocirc}{\dsfont{OvToCircuit}}
\newcommand{\Dcirctodeg}{\dsfont{CircuitToDeg}}
\newcommand{\Ddeg}{\dsfont{DDegeneracy}}

\newcommand{\cinputs}{\oper{Inputs}}

\newcommand{\cgates}{\oper{Gates}}
\newcommand{\cwires}{\oper{Wires}}

\newcommand{\orgates}{\oper{Gates}^{\textsc{OR}}}
\newcommand{\andgates}{\oper{Gates}^{\textsc{AND}}}

\newcommand{\maxdeg}{3\ell}

\begin{document}
\maketitle

\begin{abstract}
In this work, we consider the problem of maintaining an approximate value of degeneracy of a given dynamic $n$-vertex graph $G$ updated by edge insertions and deletions.
From the work of Christiansen and Rotenberg [ICALP 2022], it follows that one can design a dynamic data structure for this problem with worst-case update time $\text{poly}(d_{\mathrm{max}}, \log n)$ that maintains an integer between $d$ and $2d+3$ where $d$ is the degeneracy of $G$, under the assumption that $d$ never exceeds~$d_{\mathrm{max}}$.
We complement their result by providing a conditional lower bound:
we prove that, unless SETH fails, for any $\varepsilon, \delta > 0$, $k \in \mathbb{N}$, and function $f\colon \mathbb{N}\to \mathbb{N}$, there is no data structure which maintains a $(2-\varepsilon)$-approximation of the degeneracy of $G$ with initialization time $f(d_{\mathrm{max}})\cdot n^k$ and amortized update time $f(d_{\mathrm{max}})\cdot n^{1-\delta}$.
\end{abstract}

\iflipics
\else 
\thispagestyle{empty}
\begin{textblock}{20}(-1.9, 6.2)
    \includegraphics[width=40px]{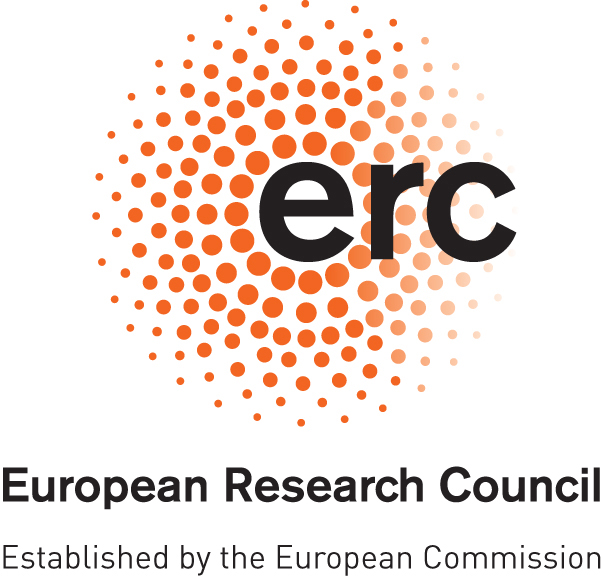}%
\end{textblock}
\begin{textblock}{20}(-2.4, 6.3)
    \includegraphics[width=80px]{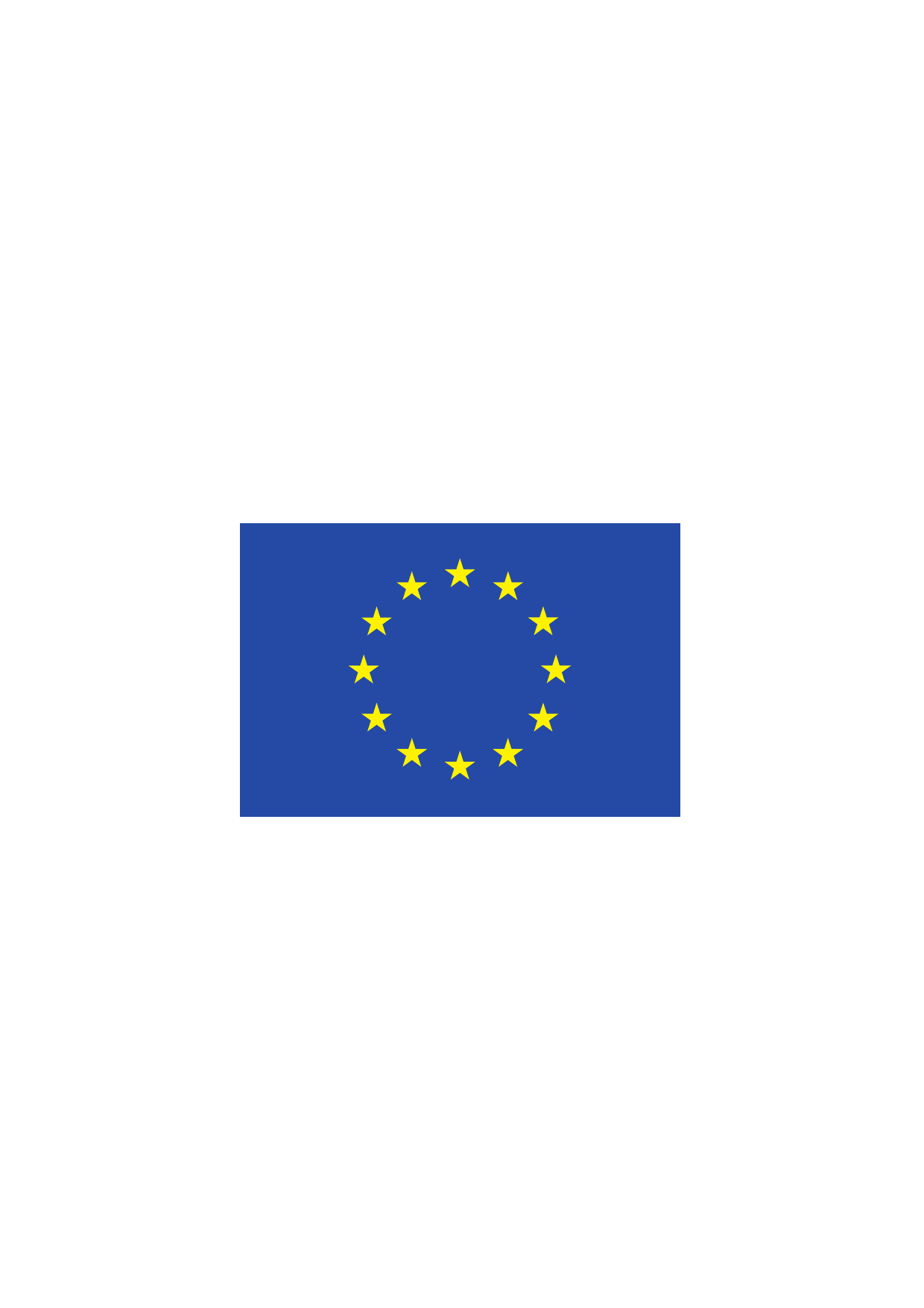}%
\end{textblock}
\newpage
\setcounter{page}{1}
\fi

\section{Introduction}
\label{sec:intro}

The \emph{degeneracy} of a graph $G$, denoted $\deg(G)$, is the smallest integer $d$ such that every subgraph of $G$ contains a vertex of degree at most $d$. A closely related parameter is the \emph{arboricity} $\arb(G)$, defined as the smallest integer $\alpha$ such that the edges of $G$ can be partitioned into $\alpha$ acyclic subgraphs. It is well known that these two parameters are within a multiplicative factor of $2$ from each other, precisely for every graph $G$ with at least one edge, we have
\begin{equation}\label{eq:arbdeg}
    \arb(G)\leq \deg(G)\leq 2\cdot \arb(G)-1.
\end{equation}
For completeness, we include a short proof. For the left inequality, by iteratively removing from the graph a vertex with the smallest degree, we can construct an ordering of vertices such that every vertex has at most $\deg(G)$ larger neighbors. Then coloring the edges with $\deg(G)$ colors so that for every vertex $u$, the edges connecting $u$ with larger neighbors receive pairwise different colors, yields a partition of the edge set into $\deg(G)$ acyclic subgraphs. For the right inequality, if we consider any subgraph $G'$ of $G$, then by restricting the partition witnessing the value of arboricity to the edge set of $G'$ we obtain a partition of $E(G')$ into $\arb(G)$ acyclic subgraphs. Consequently, $G'$ has at most $\arb(G)\cdot (|V(G')|-1)$ edges. Then it cannot happen that all the vertices of $G'$ have degrees at least $2\cdot \arb(G)$, for otherwise the total number of edges in $G'$ would be at least $\frac{2\cdot \arb(G)\cdot |V(G')|}{2}=\arb(G)\cdot |V(G')|$.

Degeneracy and arboricity often serve as parameters measuring uniform sparsity in a graph, because they stipulate that the number of edges is bounded linearly in the number of vertices not only in the whole graph, but also in every subgraph. Although these two notions are essentially equivalent thanks to \eqref{eq:arbdeg}, their popularity differs between communities. In structural graph theory, particularly the field of Sparsity (see e.g.~\cite{nevsetvril2012sparsity}), it is the degeneracy that is considered the basic notion. On the other hand, the area of dynamic and distributed algorithms often focuses on arboricity. We remark that there are two other notions that can be linked to degeneracy and arboricity by inequalities similar to \eqref{eq:arbdeg}: the \emph{maximum average degree} $\mad(G)$, defined as the maximum average degree among the subgraphs of $G$; and the \emph{orientability} $\ort(G)$, defined as the minimum integer $k$ such that the edges of $G$ can be oriented so that every vertex has outdegree at most $k$. See \cite[Chapter~3]{nevsetvril2012sparsity}.

In this work, we study the problem of maintaining an approximate value of the degeneracy of a dynamic graph, updated by edge insertions and deletions, defined as follows.

\begin{tcolorbox}
    \noindent \dyndegany{\beta}{D}
    
    \smallskip
    \noindent\textbf{Initialization:} An edgeless graph $G$ on $n$ vertices, a fixed integer $d_\mx\in \N$.
    
    \smallskip
    \noindent\textbf{Update:} Insert or delete an edge from $G$, subject to the guarantee that $\deg(G)\leq d_\mx$.

    \smallskip
    \noindent\textbf{Query:} After every update, output an integer $d$ satisfying $\deg(G) \leq d \leq \beta\cdot \deg(G) + D$.
\end{tcolorbox}

In \cite{ChristiansenR22}, Christiansen and Rotenberg gave a dynamic data structure for the (analogously defined) \ProblemName{Dynamic Approximate Arboricity} problem. Their data structure reports a number $\alpha$ satisfying $\arb(G)\leq \alpha\leq \arb(G)+2$ after every update, as well as maintains a suitable partition of the edge set of $G$ into $\alpha$ acyclic subgraphs. Every update is handled in worst-case time $\mathrm{poly}(\log n,\alpha_\mx)$, where $\alpha_\mx$ is an upper bound on the arboricity of $G$ guaranteed at all times.

By \eqref{eq:arbdeg}, the result of Christiansen and Rotenberg gives rise to a dynamic data structure for the \dyndegany{2}{3} problem with worst-case update time $\mathrm{poly}(\log n,d_\mx)$. Indeed, if their data structure returns $\alpha$, then it suffices to return $d\coloneqq \max(2\alpha-1,0)$.

The main result of this work is the following conditional lower bound under the Strong Exponential Time Hypothesis (SETH). Intuitively, it asserts that the multiplicative approximation factor of $2$ cannot be improved while maintaining strongly sublinear update time.

\begin{theorem}\label{thm:main}
    Unless SETH fails, for any $\eps, \delta > 0$, any $D, k \in \N$, and any function ${f\colon \N\to \N}$, \mbox{\dyndeg{\eps}{D}} admits no data structure whose initialization time is bounded by ${f(d_\mx)\cdot n^k}$ and whose amortized update time is bounded by $f(d_\mx)\cdot n^{1-\delta}$.
\end{theorem}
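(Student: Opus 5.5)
The plan is a chain of reductions \ov $\to$ \dyncir $\to$ \dyndeg{\eps}{D}, as suggested by the macros $\Dovtocirc$ and $\Dcirctodeg$. Under SETH, \ov on $N$ vectors of dimension $c\log N$ cannot be solved in time $N^{2-\delta'}$ for any $\delta'>0$. The first step encodes \ov as a dynamic problem on a monotone boolean circuit of bounded fan-in. Split the vectors into $A$ and $B$. Build a fixed circuit with input gates $x_1,\dots,x_m$, $m=c\log N$, that computes ``some $a\in A$ is orthogonal to $x$''. For each $a$ this is an OR over $a$'s coordinates, where each coordinate $i$ with $a_i=1$ contributes the input gate $x_i$, and the overall output is an AND of these ORs, combined over all $a$. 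Suitably negated, and with the monotonicity handled by flipping the roles of AND and OR, we obtain a circuit of size $\Oh(Nm)$ and fan-in at most $2$ after binarization, whose output is $0$ iff $x$ is orthogonal to some $a$. We then feed every $b\in B$ in turn: changing inputs from $b$ to $b'$ costs $\Oh(m)$ input flips. This gives $N\cdot\Oh(\log N)$ updates on a circuit of size $n=\tilde{\Oh}(N)$. A dynamic circuit evaluation structure with initialization $\mathrm{poly}(n)$ and update time $n^{1-\delta}$ would therefore solve \ov in time $N^{2-\delta}\mathrm{polylog}\,N$, contradicting SETH. We need the circuit to have bounded depth-related parameters only through fan-in; layering is what the next gadget will exploit.

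The main step, and the main obstacle, is $\Dcirctodeg$: simulate a monotone circuit by a graph whose degeneracy is either about $\ell$ or about $2\ell$ depending on the output, where $\ell$ is a constant chosen from $\eps$ and $D$ so that $2\ell>(2-\eps)\ell'+D$ separates the cases. The plan is to represent each gate by a dense gadget that survives the peeling process (the $\ell$-core computation) iff the gate evaluates to true. Concretely, consider peeling all vertices of degree $<2\ell$ (or $\le$ a threshold). For a gate $g$ we take a vertex set whose internal degree is just below the threshold $\maxdeg$-ish. Wires add edges from input gadgets so that an OR gate retains enough degree if any single surviving input gadget is attached, while an AND gate needs all inputs. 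The output gadget is attached to a large clique-like structure of degree about $2\ell$, so that the whole graph has a $2\ell$-core iff the output is true. Otherwise every subgraph peels at degree about $\ell$. Input variables are toggled by inserting or deleting $\Oh(\ell)$ edges that give an input gadget its extra degree.

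The delicate part is to keep the degeneracy in the false case at most roughly $\ell$ rather than just below $2\ell$. The peeling must therefore be able to collapse false gadgets entirely down to degree $\ell$, and gadgets must be bipartite-like or expander-like structures in which vertices of degree $2\ell$ rely on $\ell$ external edges. Bounded fan-in keeps all degrees $\Oh(\ell)$, so $d_\mx=\Oh(\ell)$ is a constant and $f(d_\mx)$ is absorbed. I would first attempt the construction in which each gate is a bipartite graph $K_{\ell,\ell}$ plus extra edges from predecessors, and verify the two-sided degeneracy bound by exhibiting an explicit elimination order in the false case and an explicit dense subgraph in the true case.

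Finally, compose: each circuit update becomes $\Oh(1)$ graph updates, and the graph has $\Oh(n)$ vertices. The gap $\ell$ versus $2\ell$ exceeds the $(2-\eps,D)$ approximation slack for $\ell$ large, so a \dyndeg{\eps}{D} structure decides the circuit output. Its initialization time of $f(d_\mx)n^k$ also covers building the graph by insertions from the edgeless start.
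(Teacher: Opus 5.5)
\textbf{The main gap: the circuit-to-graph reduction is never built.} The chain of reductions matches the paper's. However, the decisive step, which you yourself flag as the main obstacle, is not carried out, and what is missing is a specific idea, not routine verification. In the high case the dense subgraph is naturally a justifying tree of gate gadgets whose leaves are input gadgets, and those leaves need extra degree from somewhere. In the low case the \emph{entire} graph must still peel at threshold about $\ell$, including every subtree of gates carrying the ``dense'' value. You never say where the toggled input edges go, and that is exactly where the construction lives or dies. If the inputs' extra degree comes from a static source, or the output gadget is anchored to a clique-like structure that is dense on its own, then a subgraph of minimum degree about $2\ell$ exists regardless of $C(x)$. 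No choice of local gate gadget ($K_{\ell,\ell}$ or otherwise) fixes that.

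\textbf{What the paper does instead.} It uses a feedback loop from the output back to the inputs, adapted from Pilipczuk, Siebertz and Toru\'nczyk. Here $1$ means ``peels'' and $0$ means ``survives'' (the orientation is immaterial by duality of monotone circuits). The set $V^{\text{in}}_p$ is joined completely to its wire set iff $x(p)=0$, and every vertex of every $V^{\text{in}}_p$ additionally has $\ell$ edges into the output gadget $V^{\text{out}}$. Hence the $0$-region is self-sustaining, with minimum degree $2\ell-1$, only when the output wire carries $0$. If $C(x)=1$, the $1$-signal peels forward to the output wire, $V^{\text{out}}$ unravels, and this releases all $0$-input gadgets and then everything else. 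This only works because the gadgets are layered chains tuned so that each vertex has degree at least $2\ell-1$ while both sides are present, but at most $\ell$ once the side from which peeling arrives is gone:
\begin{itemize}
\item OR: $\ell$ layers of size $\ell-1$, with the two input wires at opposite ends.
\item AND and output: layer sizes $\ell-1,\ell,\ell$ repeating, inputs on the first layer, an $\ell$-clique at the end, and out-edges at most one per vertex on every third layer.
\end{itemize}
Neither this mechanism nor either direction of the degeneracy bound appears in your plan.

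\textbf{Errors in the first step.} The theorem allows initialization time $f(d_{\mathrm{max}})\cdot n^k$ for \emph{every} fixed $k$. A single circuit of size $\tilde{\Oh}(N)$ therefore already costs $\tilde{\Oh}(N^k)$ to initialize, which exceeds $N^2$ for $k\geq 2$, so your claimed $N^{2-\delta}\,\mathrm{polylog}\,N$ total is false. The paper avoids this by putting both $A$ and $B$ on the input side of a circuit over blocks of size $\lceil N^{1/k}\rceil$. Switch bits select one $a$, so consecutive queries differ in $\Oh(1)$ bits, and a single instance is reused over all $\Oh(N^{2-2/k})$ block pairs. With your hardcoded $A$ you could instead re-initialize once per block of $A$, for a total of $\tilde{\Oh}(N^{2-1/k}+N^{2-\delta/k})$, but some such splitting is necessary. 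Finally, bounded fan-in does not bound graph degrees, and hence does not bound $d_{\mathrm{max}}$. Each $x_i$ feeds up to $N$ OR-gates, so you also need copy trees to bound fan-out, as in the paper's $(2,2)$-circuits.
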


Thus, \cref{thm:main} shows that the problems of maintaining an approximation of the arboricity and of the degeneracy have quite different complexities: arboricity can be efficiently maintained up to an additive error of~$2$, while degeneracy requires a multiplicative factor of~$2$.

Our proof of \cref{thm:main} follows the by-now-classic approach to proving fine-grained lower bounds for dynamic problems, pioneered by Abboud and Vassilevska Williams~\cite{AbboudW14}. That is, we show that a data structure for the \dyndeg{\eps}{D} problem achieving a strongly sublinear amortized update time could be used to solve the \ov problem in strongly subquadratic time, thereby refuting the SETH. For the reduction from \ov to \dyndeg{\eps}{D} we introduce an interesting intermediate problem \dyncir: We have a Boolean circuit $C$ whose input bits are being updated over time (but $C$ itself stays fixed), and the task is to efficiently maintain the value of $C$ on the input bits. \ov can be easily reduced to \dyncir by constructing a natural circuit implementing the logic of \ov. To reduce \dyncir to \dyndeg{\eps}{D}, we exploit and strengthen the approach used by Pilipczuk, Siebertz, and Toru\'nczyk~\cite{PilipczukST18,PilipczukST18a} in their proof that deciding whether a graph has degeneracy at most $2$ is a $\mathsf{P}$-hard problem. 

\section{Preliminaries}
\label{sec:prelim}

\paragraph*{Basic notation.} for a positive integer $n$, we denote $[n] \coloneqq \{1, 2, \ldots, n\}$.
If $x \in \Z^d$ is a $d$-dimensional vector, then we write $x(i)$ to refer to its $i$-th coordinate for $i \in [d]$. (This is because $x$ can be equivalently represented as a function $x \colon [d] \to \Z$.)
We use standard asymptotic notation~$\Oh(\cdot)$. Given a set of parameters $P$, we write $\Oh_P(\cdot)$ to denote asymptotic bounds where multiplicative constant factors depend solely on the parameters in $P$.

\paragraph*{Graphs.}
If not stated otherwise, all graphs considered in this paper are undirected and simple.
For a graph $G$, we denote its set of vertices by $V(G)$ and its set of edges by $E(G)$.
We abbreviate an edge $(x, y)$ by $xy$.
Given a vertex $v \in V(G)$, we denote its degree in graph $G$ by $d_G(v)$.
The minimum vertex degree in $G$ is denoted by $\delta(G)$, and the maximum vertex degree by $\Delta(G)$.
For a subset of vertices $U \subseteq V(G)$, we denote the induced subgraph of $G$ on $U$ by $G[U]$. An \emph{independent set} in a graph is a set of pairwise non-adjacent vertices.

For an integer $k \ge 0$, a graph $G$ is \emph{$k$-degenerate} if every non-empty subgraph $H\subseteq G$ contains a vertex of degree at most $k$, i.e., $\delta(H) \le k$.
The \emph{degeneracy} of $G$, denoted $\deg(G)$, is the minimum integer $k$ such that $G$ is $k$-degenerate, which equivalently satisfies $\deg(G) = \max_{H \subseteq G} \delta(H)$.
A vertex ordering $(v_1, v_2, \ldots, v_n)$ of $V(G)$ is a \emph{$k$-degeneracy ordering} of $G$ if $d_{G[\{v_i, \ldots, v_n\}]}(v_i) \le k$ for every $i \in [n]$.
The following standard fact links degeneracy with degeneracy orderings.

\begin{fact}[folklore]\label{fact:degeneracy}
    For any graph $G$ and integer $d \ge 0$, we have $\deg(G) \le d$ if and only if $G$ admits a~$d$-degeneracy ordering.
\end{fact}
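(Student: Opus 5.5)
We prove the two directions of \cref{fact:degeneracy} separately; both are short, and we aim to stay elementary.

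\emph{($\Leftarrow$)} Suppose $(v_1,\ldots,v_n)$ is a $d$-degeneracy ordering of $G$, and let $H\subseteq G$ be any non-empty subgraph. Let $i$ be the smallest index with $v_i\in V(H)$. Every neighbor of $v_i$ in $H$ is a vertex of $H$, so it has index larger than $i$; that is, it lies in $\{v_{i+1},\ldots,v_n\}$. Since $H$ is a subgraph of $G[\{v_i,\ldots,v_n\}]$, we get
\[
d_H(v_i)\le d_{G[\{v_i,\ldots,v_n\}]}(v_i)\le d.
\]
Hence $\delta(H)\le d$ for every non-empty $H\subseteq G$, which is exactly the statement that $G$ is $d$-degenerate. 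By the definition of degeneracy as a minimum, $\deg(G)\le d$.

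\emph{($\Rightarrow$)} First note that $k$-degeneracy is monotone in $k$: if every non-empty subgraph has a vertex of degree at most $\deg(G)$, then it has one of degree at most $d$ whenever $\deg(G)\le d$. So $G$ is $d$-degenerate. We build the ordering greedily. Having chosen $v_1,\ldots,v_{i-1}$, set $U_i\coloneqq V(G)\setminus\{v_1,\ldots,v_{i-1}\}$, which is non-empty for $i\le n$. Applying $d$-degeneracy to the non-empty subgraph $G[U_i]$ yields a vertex of degree at most $d$ in $G[U_i]$; call it $v_i$. After $n$ steps this gives an ordering of all of $V(G)$. By construction $\{v_i,\ldots,v_n\}=U_i$, so $d_{G[\{v_i,\ldots,v_n\}]}(v_i)\le d$ for every $i\in[n]$, and the ordering is a $d$-degeneracy ordering.

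There is no genuine obstacle here. The only points needing care are the bookkeeping fact $\{v_i,\ldots,v_n\}=U_i$ and the observation that degrees can only drop when passing to a subgraph, which justifies the first inequality in the ($\Leftarrow$) direction.
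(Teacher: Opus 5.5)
Your proof is correct. It is the standard folklore argument: take the minimum-index vertex of a subgraph for ($\Leftarrow$), and greedily peel off a vertex of degree at most $d$ for ($\Rightarrow$). The paper states this fact without proof, and the only reasoning it gives (in the introduction) is the same iterative removal of a low-degree vertex, so your route matches the paper's.
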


For $d = \deg(G)$, any $d$-degeneracy ordering of $G$ is called a \emph{degeneracy ordering} of $G$ or an \emph{elimination ordering} of $G$.

We say that a subset of vertices $U$ of a graph $G$ \emph{$k$-unravels} in $G$ if one can iteratively remove the vertices of $U$ from $G$ so that for every vertex $u\in U$, $u$ has degree at most $k$ in the remaining graph at the time of its removal. Note from \cref{fact:degeneracy} it follows that $G$ is $k$-degenerate if and only if $V(G)$ $k$-unravels in $G$.

\paragraph*{Circuits.}
A \emph{Boolean circuit} (or simply circuit) $C$ is a vertex-labelled acyclic directed graph whose vertices are called \emph{gates}, and edges are called \emph{wires}.
The graph $C$ satisfies the following properties:
\begin{itemize}
\item there is one sink (a vertex of out-degree 0) which is called the \emph{output gate}. The output gate has only one incoming wire called the \emph{output wire};
\item there are possibly multiple sources (vertices of in-degree 0) which are called \emph{input gates}, denoted $\cinputs(C)$; and
\item every gate except the input gates and the output gate is labelled with one of the Boolean functions: AND, OR,~NOT. Both AND-gates and OR-gates may have arbitrarily many incoming wires, while every NOT-gate is required to have only one incoming wire.
\end{itemize}

We write $\cgates(C) \coloneqq V(C)$ and $\cwires(C) \coloneqq E(C)$.
The size of a circuit $C$, denoted $|C|$, is defined as $|C| \coloneqq |\cgates(C)| + |\cwires(C)|$.

An input vector to a circuit $C$ is a Boolean assignment $x \colon \cinputs(C) \to \{0, 1\}$.
Given an input vector $x$, one can naturally evaluate all the gates of the circuit by traversing them in topological order; we denote the resulting Boolean value at gate $g$ under input $x$ by $g(x)$.
The value returned by the output gate is said to be the output of $C$ on $x$, abbreviated as $C(x)$.
Two circuits $C$ and $C'$ are said to be \emph{equivalent} if there is a bijection $f \colon \cinputs(C') \to \cinputs(C)$ such that for every input vector $x$ to $C$ we have $C(x) = C'(x \circ f)$.

We say that a circuit is \emph{monotone} if it contains only AND and OR gates.
Monotone circuits are monotonic in the logical sense: flipping one of the input bits from $0$ to $1$ can only change the output from $0$ to $1$, never from $1$ to $0$.
A circuit $C$ is said to be a $(2,2)$-circuit if every gate in $C$ has in-degree at most $2$ (at most $2$ input wires) and out-degree at most $2$ (at most $2$ output wires).
Note that both AND and OR gates with only one incoming wire act as a copy gate.

\begin{figure}[ht]
\centering
\begin{tikzpicture}[
    node distance=1.2cm and 0.8cm,
    gate/.style={rectangle, rounded corners=5pt, draw=black!80, fill=blue!8, minimum width=1.2cm, minimum height=0.7cm, font=\small\sffamily, align=center},
    input/.style={rectangle, rounded corners=5pt, draw=black!80, fill=green!12, minimum width=1.0cm, minimum height=0.7cm, font=\small\sffamily, align=center},
    output/.style={rectangle, rounded corners=5pt, draw=black!80, fill=orange!18, minimum width=1.5cm, minimum height=0.7cm, font=\small\sffamily, align=center},
    wire/.style={->, >=stealth, semithick, draw=black!75},
    bit/.style={font=\small\sffamily\bfseries, color=red!85!black, inner sep=1.5pt}
]

    \node[input] (p1) {$x_1 = 1$};
    \node[input, right=of p1] (p2) {$x_2 = 0$};
    \node[input, right=of p2] (p3) {$x_3 = 1$};
    \node[input, right=of p3] (p4) {$x_4 = 0$};
    \node[input, right=of p4] (p5) {$x_5 = 1$};

    \node[gate, below=of p1, xshift=0.5cm] (g1) {AND};
    \node[gate, below=of p3] (g2) {OR}; 
    \node[gate, below=of p5, xshift=-0.5cm] (g3) {AND};

    \node[gate, below=of g1, xshift=0.7cm] (g4) {OR};
    \node[gate, below=of g3, xshift=-0.7cm] (g5) {AND};

    \node[gate, below=1.0cm of p3, yshift=-2.8cm] (g6) {OR};

    \node[output, below=1.1cm of g6] (out) {Output Gate};

    \draw[wire] (p1) -- node[bit, left] {1} (g1);
    \draw[wire] (p2) -- node[bit, right, pos=0.35] {0} (g1);

    \draw[wire] (p2) -- node[bit, left, pos=0.65] {0} (g2);
    \draw[wire] (p3) -- node[bit, right] {1} (g2);
    \draw[wire] (p4) -- node[bit, right, pos=0.65] {0} (g2);

    \draw[wire] (p4) -- node[bit, left, pos=0.35] {0} (g3);
    \draw[wire] (p5) -- node[bit, right] {1} (g3);

    \draw[wire] (g1) -- node[bit, left] {0} (g4);
    \draw[wire] (g2) -- node[bit, right, pos=0.3] {1} (g4);
    \draw[wire] (g2) -- node[bit, left, pos=0.3] {1} (g5);
    \draw[wire] (g3) -- node[bit, right] {0} (g5);

    \draw[wire] (g4) -- node[bit, left] {1} (g6);
    \draw[wire] (g5) -- node[bit, right] {0} (g6);

    \draw[wire] (g6) -- node[bit, left] {1} node[right, font=\scriptsize\sffamily, color=black] {output wire} (out);

\end{tikzpicture}
\caption{An example Boolean circuit $C$. The circuit is monotone (contains no NOT gates) but is not a $(2,2)$-circuit because it contains a gate with in-degree $3$ (the OR-gate in the first layer). Red labels indicate the bit values propagated along wires under the given input vector.}
\label{fig:example-circuit}
\end{figure}

\begin{fact}
    \label{fact:circuit22}
    For every circuit $C$ there exists an equivalent $(2,2)$-circuit $C'$ of size $\Oh(|C|)$.
    Additionally, if $C$ is monotone then so is $C'$.
\end{fact}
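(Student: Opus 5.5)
We transform $C$ locally, gate by gate, so that every gate has in-degree at most $2$ and out-degree at most $2$. The construction handles fan-in and fan-out separately, and it introduces only AND and OR gates apart from the original NOT gates. This gives monotonicity preservation for free.

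\textbf{Step 1: fan-in.} Consider an AND- or OR-gate $g$ with incoming wires from $h_1,\dots,h_m$ where $m\ge 3$. We replace $g$ by a chain (or balanced binary tree) of $m-1$ gates of the same type, each with in-degree $2$. For the chain, the first gate takes $h_1,h_2$, and the $i$-th gate takes the output of the previous one together with $h_{i+1}$. The last gate takes over all outgoing wires of $g$. Associativity of AND and OR guarantees that the value of the last gate equals the value of $g$ on every input. NOT-gates already have in-degree $1$. For the output gate, we may assume it has in-degree $1$ by definition. Input gates have in-degree $0$. This step adds $\Oh(m)$ gates and wires per original gate, so $\Oh(|C|)$ in total.

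\textbf{Step 2: fan-out.} Next consider any gate $g$ (including input gates) with out-neighbours $u_1,\dots,u_m$ where $m\ge 3$. We keep $g$ but give it out-degree at most $2$, and distribute its value through a binary tree (or path) of copy gates. A copy gate is an AND-gate with a single incoming wire, which is allowed by the remark preceding the statement. For example, use a path $c_1,\dots,c_{m-1}$ of copy gates: $g$ feeds $u_1$ and $c_1$; each $c_i$ feeds $u_{i+1}$ and $c_{i+1}$; and $c_{m-1}$ feeds $u_{m-1}$ and $u_m$. Every copy gate has in-degree $1$ and out-degree at most $2$, and computes the value of $g$. In-degrees of the $u_j$ are unchanged, so Step 1 is not undone.

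\textbf{Correctness and size.} The set of input gates is unchanged, so the identity bijection witnesses equivalence. By induction along a topological order of the new circuit, every new gate computes either the value of an original gate or a partial AND/OR of its inputs. Hence the output is preserved. Each original wire gives rise to $\Oh(1)$ new gates and wires in each step, so the total size is $\Oh(|C|)$. Only AND/OR gates are added, so if $C$ is monotone, then so is $C'$.

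There is no real obstacle here. The only point needing care is that the two steps do not interfere: fan-out trees must not increase in-degrees, and fan-in chains must not increase out-degrees beyond what Step 2 subsequently repairs. Performing Step 1 before Step 2 handles this.
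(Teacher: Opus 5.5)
Your proposal is correct and follows essentially the same approach as the paper's sketch. High fan-in AND/OR gates are split into chains or trees of binary gates of the same type, and high fan-out is distributed through a tree of single-input AND (copy) gates; each original wire costs $\Oh(1)$ new gates and wires, and no NOT gates are added, so monotonicity is preserved.

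There is one off-by-one slip in your fan-out path: as written, $u_{m-1}$ receives wires from both $c_{m-2}$ and $c_{m-1}$, which could raise its in-degree to $3$ or give a NOT gate two inputs. Use copy gates $c_1,\ldots,c_{m-2}$ instead, with $c_{m-2}$ feeding $u_{m-1}$ and $u_m$.
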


\begin{proof}[Sketch of proof]
If any gate $g \in \cgates(C)$ has $k > 2$ output wires going to gates $g_1, \ldots, g_k$, we replace these wires with a binary tree $T$ such that: (1) $T$ is rooted at $g$ and all its leaves are $g_1, \ldots, g_k$; (2) every internal node of $T$ is an AND gate with one input wire and two outgoing wires, thus the size of $T$ is $\Theta(k)$.
Similar transformations can be performed for AND / OR gates which have more than two incoming wires.
\end{proof}

\paragraph*{SETH and OV.}
The \emph{Strong Exponential Time Hypothesis} (SETH), introduced by Impagliazzo and Paturi~\cite{impagliazzo2001complexity}, postulates that $k$-SAT (the satisfiability problem for Boolean formulas in $k$-CNF) cannot be solved significantly faster than exhaustive search as $k$ grows. Formally, for $k \ge 3$, let $s_k \in [0, 1]$ be the infimum of all real numbers $\delta$ such that $k$-SAT on $n$ variables can be solved in time $\Oh(2^{\delta n})$. SETH asserts that:
\[
\lim_{k \to \infty} s_k = 1.
\]
Equivalently, SETH asserts that for every $\eps > 0$, there exists an integer $k \ge 3$ such that $k$-SAT cannot be solved in time $\Oh((2-\eps)^n)$.

While SETH was originally formulated to rule out faster algorithms for exact exponential-time problems, fine-grained complexity leverages it to prove lower bounds for polynomial-time dynamic and static problems. The standard stepping stone for translating SETH into lower bounds for problems in $\mathsf{P}$ is the \ProblemName{Orthogonal Vectors} (OV) problem.

\begin{tcolorbox}
    \noindent \ov
    
    \smallskip
    \noindent\textbf{Given:} an integer $d$ and two sets $A, B \subseteq \{0,1\}^d$, each consisting of $n$ vectors.

    \smallskip
    \noindent\textbf{Question:} do there exist vectors $a \in A$ and $b \in B$ such that $\scalar{a}{b} = 0$?
\end{tcolorbox}

\begin{fact}[\cite{williams2005new}]
    \label{fact:ov-lb}
    Unless SETH fails, for any $\eps, \delta > 0$, \ov in dimension $d \leq \Oh(n^{\delta})$ cannot be solved in time $\Oh(n^{2-\eps})$.
\end{fact}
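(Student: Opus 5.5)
The plan is the classical split-and-list reduction from $k$-SAT to \ov due to Williams. Suppose that for some $\eps>0$ there is an algorithm solving \ov with $|A|=|B|=n$ and $d\le \Oh(n^{\delta})$ in time $\Oh(n^{2-\eps})$. I will use it to solve $k$-SAT in time $\Oh((2-\eps')^N)$ for some fixed $\eps'>0$ and every $k$, contradicting SETH.

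\textbf{Step 1: sparsify.} Take a $k$-CNF formula $\varphi$ on $N$ variables with $M$ clauses. The dimension of the constructed instance will be $M$, so $M$ must be small compared to $n$; this is the main obstacle. A priori $M$ can be as large as $N^k$, which is only polylogarithmic in $n=2^{N/2}$ and therefore already fine for $d\le \Oh(n^\delta)$. However, I plan to apply the Sparsification Lemma of Impagliazzo, Paturi and Zane anyway to get the cleaner bound $d=\Oh(\log n)$. For any $\gamma>0$ it produces, in time $2^{\gamma N}\mathrm{poly}(N)$, at most $2^{\gamma N}$ $k$-CNF formulas $\varphi_1,\ldots,\varphi_t$. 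Each $\varphi_i$ has $\Oh_{k,\gamma}(N)$ clauses, and $\varphi$ is satisfiable if and only if some $\varphi_i$ is.

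\textbf{Step 2: split and list.} For each $\varphi_i$ with clauses $c_1,\ldots,c_M$, split the variables into halves $V_1,V_2$ of size $\ceil{N/2}$ each, adding a dummy variable if $N$ is odd. For every assignment $\alpha$ of $V_1$, define $a_\alpha\in\{0,1\}^M$ by setting $a_\alpha(j)=1$ if and only if no literal of $c_j$ over $V_1$ is made true by $\alpha$. Define $b_\beta$ analogously for assignments $\beta$ of $V_2$. Then $\scalar{a_\alpha}{b_\beta}=0$ holds exactly when every clause is satisfied by $\alpha$ or by $\beta$, that is, when $\alpha\cup\beta$ satisfies $\varphi_i$. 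Thus $\varphi_i$ is satisfiable if and only if the instance $A=\{a_\alpha\}$, $B=\{b_\beta\}$ is a \yesinstance. If the lists contain repeated vectors, I keep them as a multiset or pad with all-ones vectors, so that both sets have size exactly $n=2^{\ceil{N/2}}$. Building the instance takes time $\Oh(nM)$, and the dimension is $d=M=\Oh(N)=\Oh(\log n)\le \Oh(n^{\delta})$.

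\textbf{Step 3: count the running time.} Each OV call costs $\Oh(n^{2-\eps})=\Oh(2^{(1-\eps/2)N})$ up to constant factors. Over all sparsified formulas the total is $2^{\gamma N}\cdot 2^{(1-\eps/2)N}\cdot\mathrm{poly}(N)$. Choosing $\gamma=\eps/4$ gives time $\Oh(2^{(1-\eps/8)N})$ for $k$-SAT, and this holds for every $k$, where the constant may depend on $k$. This means $s_k\le 1-\eps/8$ for all $k$, contradicting $\lim_{k\to\infty}s_k=1$.

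The only delicate point is keeping the dimension within the allowed $\Oh(n^\delta)$ while losing only a $2^{\gamma N}$ factor. The Sparsification Lemma (or, alternatively, the trivial bound $M\le \Oh(N^k)$) handles this.
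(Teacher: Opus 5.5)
Your proposal is correct. It is the standard split-and-list reduction from the cited work of Williams (with sparsification as an optional convenience), which is the argument the paper relies on; the paper gives no proof of its own and only cites \cite{williams2005new}. One small point: padding a set with several copies of the all-ones vector does not give distinct vectors, but your multiset option (or appending a few coordinates that encode each assignment's index) fixes this, and, as you note, for the weak bound $d \le \Oh(n^{\delta})$ the trivial polynomial bound on the number of distinct clauses already suffices without sparsification.
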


\section{From Orthogonal Vectors to Dynamic Circuit Evaluation}
\label{sec:ov-to-circuit}

Let us define the problem of \dyncir which appears as an intermediate problem in our proof of \cref{thm:main}.

\begin{tcolorbox}
    \noindent \dyncir
    
    \smallskip
    \noindent\textbf{Initialization:} a fixed Boolean circuit $C$ and an input vector of bits $x$.
    
    \smallskip
    \noindent\textbf{Update:} flip one of the bits in the input vector $x$.

    \smallskip
    \noindent\textbf{Query:} after each update, report the value of $C(x)$.
\end{tcolorbox}

In this section, we show that SETH-based hardness of \ov (\cref{fact:ov-lb}) rules out the existence of an efficient data structure for \dyncir.
The following lemma establishes a desired lower bound.

\begin{lemma}
    \label{lem:circuit-lower-bound}
    Unless SETH fails, for any $k \in \N$ and $\eps > 0$, \dyncir admits no data structure with initialization time $\Oh(m^k)$ and amortized update time $\Oh(m^{1-\eps})$, where $m$ is the size of the circuit.
    The result holds even if the problem is restricted to monotone $(2,2)$-circuits.
\end{lemma}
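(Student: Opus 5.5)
We reduce \ov to \dyncir, following the classic approach of Abboud and Vassilevska Williams. Suppose a data structure $\Dcirc$ for \dyncir has initialization time $\Oh(m^k)$ and amortized update time $\Oh(m^{1-\eps})$. Take an \ov instance $A,B\subseteq\{0,1\}^d$ with $|A|=|B|=n$ and $d\le \Oh(n^{\delta})$, where $\delta>0$ is a small constant fixed later depending on $k$ and $\eps$.

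Split $A$ into $n/s$ groups $A_1,\dots,A_{n/s}$ of size $s=n^{\gamma}$, with $\gamma>0$ chosen later. For one group, build a monotone circuit $C$. Its inputs are $sd$ bits encoding the negated vectors of the group, $\bar a(i)=1-a(i)$, together with $d$ bits encoding a query vector $b$. For each $a$ in the group and each coordinate $i$, compute $\bar a(i)\vee \neg b(i)$. Here $\neg b(i)$ is also supplied directly as an input bit, so the circuit stays monotone: the $d$ query inputs hold the bits $1-b(i)$. AND these over $i$; the result is $1$ iff $\scalar ab=0$. Finally OR over the $s$ vectors of the group. The size is $m=\Oh(sd)$, and by \cref{fact:circuit22} we may convert the circuit to a monotone $(2,2)$-circuit, still of size $\Oh(sd)$.

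The algorithm processes each group $A_j$ in turn. It initializes $\Dcirc$ on the circuit with the $A_j$-bits and with $b$ set to the first vector of $B$. It then walks through all vectors of $B$, changing the $d$ query bits from one vector to the next. Each transition costs at most $d$ flips, and after each vector it queries the output.

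A subtlety arises because flips are performed one at a time. The output can be $1$ at an intermediate state, when the query bits encode a mixture of two vectors, so we read the output only after all flips for a vector are done. Since the circuit is fixed and only its inputs change, this is legitimate. Alternatively, we could avoid the issue by first resetting every bit to the all-zero query, which makes the output $0$ by monotonicity. The mixture is harmless anyway, because we only read the output at full vectors.

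The runtime is as follows. Per group, initialization costs $\Oh((sd)^k)$ and the updates cost $\Oh(nd\cdot (sd)^{1-\eps})$; the amortization is fine since all updates for a group form a single sequence. Summing over the $n/s$ groups gives
\[
\Oh\bigl((n/s)(sd)^k + (n^2/s)\, d\, (sd)^{1-\eps}\bigr)
= \Oh\bigl(n\, s^{k-1} d^k + n^2 s^{-\eps} d^{2-\eps}\bigr).
\]
The hardest step is choosing the parameters so that both terms are $\Oh(n^{2-\eps'})$. Choose $\gamma$ with $\gamma(k-1)<1/2$, for instance $\gamma=1/(2k)$. Then choose $\delta$ small enough that $k\delta<1/2-\gamma(k-1)$ and $(2-\eps)\delta<\eps\gamma/2$. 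The first term is then $n^{1+\gamma(k-1)+k\delta}\le n^{2-\Omega(1)}$, and the second is $n^{2-\eps\gamma+(2-\eps)\delta}\le n^{2-\eps\gamma/2}$.

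The resulting algorithm solves \ov in dimension $\Oh(n^\delta)$ in time $\Oh(n^{2-\eps'})$, contradicting \cref{fact:ov-lb}. The one point to double-check is that \cref{fact:ov-lb} holds for every $\delta>0$, and hence for our small $\delta$, which is exactly how it is stated.
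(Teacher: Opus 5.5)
Your proof is correct, but it is organised differently from the paper's. The paper first builds, in \cref{lem:ov-to-circuit}, a circuit $C_{n',d}$ that stores $n'$ vectors from \emph{both} sides. Extra switch bits $\oper{switch}^A_i$ select one vector $a_k$, and the output is $0$ iff $a_k$ is orthogonal to some loaded $b$; this is monotone without negated inputs. The paper then blocks both $A$ and $B$ into pieces of size $n'=\lceil n^{1/k}\rceil$ and initializes $\Dcirc$ only once. For each of the $s^2$ block pairs it reloads all $O(n'd)$ input bits by flips, then cycles through the switches with $O(1)$ flips per vector. You instead use the standard one-sided pattern: only $A$ is blocked, the circuit holds negated $A$-bits plus one negated query vector, every $b\in B$ is streamed with at most $d$ flips, and $\Dcirc$ is re-initialized per block. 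Both accountings give the same dominant update cost $n^2\cdot(\text{block size})^{-\eps}\cdot d^{2-\eps}$. The paper's route pays for a single initialization of cost $(n'd)^k\le n^{1+\eps/2}$. Yours avoids the switch gadget and two-sided blocking, at the price of $n/s$ initializations, which your choice $\gamma=1/(2k)$ and small $\delta$ keeps at $O(n^{3/2})$; you could also load each new $A$-block by $O(sd)$ flips instead. One aside in your proposal is wrong, though unused: setting the negated query bits to all zeros means $b=\mathbf{1}$, and then the output is $1$ iff the block contains the zero vector, so it is not forced to $0$. As you note, this does not matter, since you only read the output after a full vector has been loaded.
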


We begin with an auxiliary result which is the crux of the reduction.

\begin{lemma}
    \label{lem:ov-to-circuit}
    There is a data structure $\Dovtocirc$ that supports the following methods:
    \begin{itemize}
        \item $\oper{init}(n, d)$: Given two integers $n, d \in \N$, fixes the values of $n$ and $d$ for the entire run, and returns a monotone Boolean $(2,2)$-circuit $C_{n, d}$ with $\Oh(n d)$ gates.
        Runs in time $\Oh(n d)$.
        \item $\oper{ovToCirc}(A, B)$: Given two sets $A, B \subseteq \{0,1\}^d$, each of size $n$, returns a sequence\footnote{Formally, $\oper{ovToCirc}(A, B)$ returns a vector $x_1$ and a sequence of sets $F_2, \ldots, F_n$ such that the vectors $x_2, \ldots, x_n$ can be obtained from $x_1$ by flipping the bits on positions $F_2, \ldots, F_n$, respectively. We assume it returns all vectors $x_1, \ldots, x_n$ for simplicity of further notation.} $(x_1, \ldots, x_n)$ of input vectors to the circuit $C_{n, d}$ satisfying the following properties:
        \begin{itemize}
            \item for every $k \in [n]$, vectors $x_k$ and $x_1$ differ on at most $2$ positions; and
            \item $C_{n, d}(x_k) = 0$ for some $k \in [n]$ if and only if there exist vectors $a \in A$ and $b \in B$ such that $\scalar{a}{b} = 0$, that is, if $(d, A, B)$ is a~\yesinstance of \ov.
        \end{itemize}
        $\oper{OvToCirc}$ runs in time $\Oh(n d)$ as well.
    \end{itemize}
\end{lemma}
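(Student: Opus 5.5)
We need a circuit of size $\Oh(nd)$, fixed given $n,d$, and a sequence of inputs such that each $x_k$ differs from $x_1$ in at most 2 positions. The key constraint is that only one vector of $B$ can be selected at a time; the whole set $A$ has to be hard-wired into the input bits.

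\textbf{Input layout.} The input of $C_{n,d}$ consists of three groups of bits.
\begin{itemize}
\item Bits $a_{i,j}$ for $i\in[n]$, $j\in[d]$, encoding the set $A=\{a_1,\dots,a_n\}$.
\item Bits $b_j$ for $j\in[d]$, meant to hold the current vector of $B$.
\end{itemize}
Changing $b$ from $b_1$ to $b_k$ may flip up to $d$ bits, which violates the ``at most $2$ positions'' requirement. So the vector of $B$ cannot be written explicitly; it must be selected.

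\textbf{Selector design.} All of $B$ is also stored in input bits $\beta_{k,j}$ for $k\in[n]$, $j\in[d]$. In addition there are $n$ selector bits $s_1,\dots,s_n$, and $x_k$ has exactly $s_k=1$. Then $x_k$ differs from $x_1$ only at $s_1$ and $s_k$, which gives the two-position bound.

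The circuit first computes the selected vector $b_j=\bigvee_k (s_k\wedge \beta_{k,j})$ for every $j$. This costs $\Oh(nd)$ gates and uses no negation.

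Next, for each $i$ it computes $\mathrm{nonorth}_i=\bigvee_j (a_{i,j}\wedge b_j)$. This is the value $[\scalar{a_i}{b}\neq 0]$, and it is monotone.

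The output is $\bigwedge_i \mathrm{nonorth}_i$. It equals $0$ exactly when some $a_i$ is orthogonal to the selected $b$. Hence $C(x_k)=0$ for some $k$ if and only if the instance is a \yesinstance.

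\textbf{Size and degree.} All gates are AND/OR with large fan-in or fan-out: $b_j$ feeds $n$ gates, $s_k$ feeds $d$ gates, and the big AND/OR gates have fan-in up to $n$ or $d$. \Cref{fact:circuit22} converts this into a monotone $(2,2)$-circuit with $\Oh(|C|)=\Oh(nd)$ gates, because the wire count is $\Oh(nd)$. Since the structure depends only on $n$ and $d$, $\oper{init}$ can build it, with the fan-in and fan-out trees, in $\Oh(nd)$ time.

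\textbf{The method $\oper{ovToCirc}$.} It writes $x_1$ (the bits $a_{i,j}$, the bits $\beta_{k,j}$, and $s=e_1$) in $\Oh(nd)$ time. The flip sets are $F_k=\{s_1,s_k\}$ for $k\ge 2$, listed in $\Oh(n)$ total time.

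\textbf{Main obstacle.} The delicate point is monotonicity combined with the two-flip condition. The selector trick with AND/OR multiplexing handles it: only one $s_k$ is ever $1$, so no negation is needed. The remaining work is checking that the tree expansions of \cref{fact:circuit22} preserve the $\Oh(nd)$ bound and the input correspondence.
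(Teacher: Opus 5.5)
Your construction is correct and is the paper's, up to the symmetric choice of which set is multiplexed. The paper stores both sets in input bits and uses $n$ one-hot switch bits to copy a single vector of $A$ (where you copy one of $B$) into $d$ gates $\bigvee_i(\mathrm{bit}_{i,j}\wedge\mathrm{switch}_i)$; it then takes the AND over the vectors of the other set of the OR over coordinates of pairwise ANDs, and applies \cref{fact:circuit22}, exactly as you do. One small cleanup: your input layout first lists the $b_j$ as input bits before turning them into computed gates, so the final input groups should be stated as the $a_{i,j}$, $\beta_{k,j}$ and $s_k$.
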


\begin{proof}
{
    \newcommand{\inputsimple}{\oper{input}}
    \newcommand{\inputswitch}{\oper{switch}}
    \newcommand{\inputprim}{\oper{input'}}
    \newcommand{\choice}{\oper{choice}}
    \newcommand{\compare}{\oper{compare}}
    \newcommand{\orthogonal}{\oper{orthogonal}}
    \newcommand{\result}{\oper{output}}

    Let us describe the circuit $C_{n, d}$ returned by $\oper{init}(n, d)$.
    First, we provide an auxiliary monotone circuit $C_{n,d}'$ that is not necessarily a $(2,2)$-circuit.
    The set of input gates consists of three families of sets:
    \begin{align*}
        \cinputs(C_{n,d}') \coloneqq & \phantom{\cup}\,\ \{\inputsimple^A_{i,j} \mid i \in [n], j \in [d]\} \\
        & \cup \{\inputsimple^B_{i, j} \mid i \in [n], j \in [d]\} \\
        & \cup \{\inputswitch^A_{i} \mid i \in [n] \}.
    \end{align*}

    The remaining gates of $C_{n,d}'$ are defined as follows (see \cref{fig:example-ov-to-circ} for an example circuit).
    \begin{itemize}
        \item $\inputprim^A_{i,j} \coloneqq \text{AND}(\inputsimple^A_{i,j}, \inputswitch^A_i)$ for every $i \in [n]$,
        \item $\choice^A_{j} \coloneqq \text{OR}(\inputprim^A_{1, j}, \inputprim^A_{2, j}, \ldots, \inputprim^A_{n, j})$ for every $j \in [d]$,
        \item $\compare_{i,j} \coloneqq \text{AND}(\choice^A_j, \inputsimple^B_{i, j})$ for every $i \in [n]$ and $j \in [d]$,
        \item $\orthogonal_{i} \coloneqq \text{OR}(\compare_{i, 1}, \compare_{i, 2}, \ldots, \compare_{i, d})$ for every $i \in [n]$,
        \item $\result \coloneqq \text{AND}(\orthogonal_1, \orthogonal_2, \ldots, \orthogonal_n)$.
    \end{itemize}

    \begin{figure}[ht]
\centering
\begin{tikzpicture}[
    scale=0.75, transform shape,
    in_node/.style={draw=blue!60!black, fill=blue!8, rectangle, rounded corners=3pt, inner sep=3pt, font=\footnotesize, align=center},
    sw_node/.style={draw=orange!80!black, fill=orange!12, rectangle, rounded corners=3pt, inner sep=3pt, font=\footnotesize, align=center},
    gate_and/.style={draw=purple!80!black, fill=purple!10, rectangle, rounded corners=3pt, inner sep=3pt, font=\footnotesize, align=center},
    gate_or/.style={draw=teal!80!black, fill=teal!10, rectangle, rounded corners=3pt, inner sep=3pt, font=\footnotesize, align=center},
    gate_out/.style={draw=red!80!black, fill=red!15, rectangle, rounded corners=3pt, inner sep=4pt, font=\small\bfseries, align=center},
    box_group/.style={draw=gray!40, dashed, rounded corners=6pt, fill=gray!8, inner sep=8pt},
    box_vector/.style={draw=blue!50!black, dotted, rounded corners=4pt, fill=blue!4, inner sep=3pt},
    wire/.style={gray!40, ->, >=stealth, thin}
]

    \node[in_node] (inA11) at (-12.0, 0) {$\operatorname{input}^A_{1,1}$};
    \node[in_node] (inA12) at (-10.6, 0) {$\operatorname{input}^A_{1,2}$};
    \node[in_node] (inA13) at (-9.2, 0)  {$\operatorname{input}^A_{1,3}$};
    \node[in_node] (inA21) at (-7.4, 0)  {$\operatorname{input}^A_{2,1}$};
    \node[in_node] (inA22) at (-6.0, 0)  {$\operatorname{input}^A_{2,2}$};
    \node[in_node] (inA23) at (-4.6, 0)  {$\operatorname{input}^A_{2,3}$};

    \node[sw_node] (swA1) at (-2.7, 0) {$\operatorname{switch}^A_1$};
    \node[sw_node] (swA2) at (-1.3, 0) {$\operatorname{switch}^A_2$};

    \node[in_node] (inB11) at (0.5, 0) {$\operatorname{input}^B_{1,1}$};
    \node[in_node] (inB12) at (2.2, 0) {$\operatorname{input}^B_{1,2}$};
    \node[in_node] (inB13) at (3.9, 0) {$\operatorname{input}^B_{1,3}$};
    \node[in_node] (inB21) at (6.1, 0) {$\operatorname{input}^B_{2,1}$};
    \node[in_node] (inB22) at (7.8, 0) {$\operatorname{input}^B_{2,2}$};
    \node[in_node] (inB23) at (9.5, 0) {$\operatorname{input}^B_{2,3}$};

    \node[box_group, fit=(inA11)(inA23), label=above:{\small Matrix $A$ Inputs}] (boxA) {};
    \node[box_group, fit=(swA1)(swA2), label=above:{\small Choice Switches}] (boxSW) {};
    \node[box_group, fit=(inB11)(inB23), label=above:{\small Matrix $B$ Inputs}] (boxB) {};

    \node[box_vector, fit=(inA11)(inA13), label={[anchor=north, yshift=-5pt]north:{\small $a_1$}}] (box_a1) {};
    \node[box_vector, fit=(inA21)(inA23), label={[anchor=north, yshift=-5pt]north:{\small $a_2$}}] (box_a2) {};
    \node[box_vector, fit=(inB11)(inB13), label={[anchor=north, yshift=-5pt]north:{\small $b_1$}}] (box_b1) {};
    \node[box_vector, fit=(inB21)(inB23), label={[anchor=north, yshift=-5pt]north:{\small $b_2$}}] (box_b2) {};

    \node[gate_and] (inP11) at (-12.0, -2.4) {$\operatorname{input'}^A_{1,1}$\\\textcolor{purple!80!black}{\tiny [AND]}};
    \node[gate_and] (inP12) at (-10.6, -2.4) {$\operatorname{input'}^A_{1,2}$\\\textcolor{purple!80!black}{\tiny [AND]}};
    \node[gate_and] (inP13) at (-9.2, -2.4)  {$\operatorname{input'}^A_{1,3}$\\\textcolor{purple!80!black}{\tiny [AND]}};
    \node[gate_and] (inP21) at (-7.4, -2.4)  {$\operatorname{input'}^A_{2,1}$\\\textcolor{purple!80!black}{\tiny [AND]}};
    \node[gate_and] (inP22) at (-6.0, -2.4)  {$\operatorname{input'}^A_{2,2}$\\\textcolor{purple!80!black}{\tiny [AND]}};
    \node[gate_and] (inP23) at (-4.6, -2.4)  {$\operatorname{input'}^A_{2,3}$\\\textcolor{purple!80!black}{\tiny [AND]}};

    \node[gate_or] (ch1) at (-9.7, -4.4) {$\operatorname{choice}^A_1$\\\textcolor{teal!80!black}{\tiny [OR]}};
    \node[gate_or] (ch2) at (-8.3, -4.4) {$\operatorname{choice}^A_2$\\\textcolor{teal!80!black}{\tiny [OR]}};
    \node[gate_or] (ch3) at (-6.9, -4.4) {$\operatorname{choice}^A_3$\\\textcolor{teal!80!black}{\tiny [OR]}};

    \node[gate_and] (cmp11) at (0.5, -7.6) {$\operatorname{compare}_{1,1}$\\\textcolor{purple!80!black}{\tiny [AND]}};
    \node[gate_and] (cmp12) at (2.2, -7.6) {$\operatorname{compare}_{1,2}$\\\textcolor{purple!80!black}{\tiny [AND]}};
    \node[gate_and] (cmp13) at (3.9, -7.6) {$\operatorname{compare}_{1,3}$\\\textcolor{purple!80!black}{\tiny [AND]}};

    \node[gate_and] (cmp21) at (6.1, -7.6) {$\operatorname{compare}_{2,1}$\\\textcolor{purple!80!black}{\tiny [AND]}};
    \node[gate_and] (cmp22) at (7.8, -7.6) {$\operatorname{compare}_{2,2}$\\\textcolor{purple!80!black}{\tiny [AND]}};
    \node[gate_and] (cmp23) at (9.5, -7.6) {$\operatorname{compare}_{2,3}$\\\textcolor{purple!80!black}{\tiny [AND]}};

    \node[gate_or] (orth1) at (2.2, -9.6) {$\operatorname{orthogonal}_1$\\\textcolor{teal!80!black}{\tiny [OR]}};
    \node[gate_or] (orth2) at (7.8, -9.6) {$\operatorname{orthogonal}_2$\\\textcolor{teal!80!black}{\tiny [OR]}};

    \node[gate_out] (out) at (5.0, -11.4) {$\operatorname{output}$\\\textcolor{red!80!black}{\tiny [AND]}};

    \draw[wire] (inA11) -- (inP11); \draw[wire] (inA12) -- (inP12); \draw[wire] (inA13) -- (inP13);
    \draw[wire] (inA21) -- (inP21); \draw[wire] (inA22) -- (inP22); \draw[wire] (inA23) -- (inP23);

    \draw[wire] (swA1) -- (inP11); \draw[wire] (swA1) -- (inP12); \draw[wire] (swA1) -- (inP13);
    \draw[wire] (swA2) -- (inP21); \draw[wire] (swA2) -- (inP22); \draw[wire] (swA2) -- (inP23);

    \draw[wire] (inP11) -- (ch1); \draw[wire] (inP21) -- (ch1);
    \draw[wire] (inP12) -- (ch2); \draw[wire] (inP22) -- (ch2);
    \draw[wire] (inP13) -- (ch3); \draw[wire] (inP23) -- (ch3);

    \draw[wire] (ch1) -- (cmp11); \draw[wire] (ch1) -- (cmp21);
    \draw[wire] (ch2) -- (cmp12); \draw[wire] (ch2) -- (cmp22);
    \draw[wire] (ch3) -- (cmp13); \draw[wire] (ch3) -- (cmp23);

    \draw[wire] (inB11) -- (cmp11); \draw[wire] (inB12) -- (cmp12); \draw[wire] (inB13) -- (cmp13);
    \draw[wire] (inB21) -- (cmp21); \draw[wire] (inB22) -- (cmp22); \draw[wire] (inB23) -- (cmp23);

    \draw[wire] (cmp11) -- (orth1); \draw[wire] (cmp12) -- (orth1); \draw[wire] (cmp13) -- (orth1);
    \draw[wire] (cmp21) -- (orth2); \draw[wire] (cmp22) -- (orth2); \draw[wire] (cmp23) -- (orth2);

    \draw[wire] (orth1) -- (out);
    \draw[wire] (orth2) -- (out);

\end{tikzpicture}
\caption{Structure of the circuit $C'_{n,d}$ constructed in \cref{lem:ov-to-circuit} for $n=2$ and $d=3$.}
\label{fig:example-ov-to-circ}
\end{figure}

    The size of the circuit $C_{n,d}'$ is $\Oh(nd)$.
    Finally, by applying \cref{fact:circuit22} we can obtain the desired monotone $(2,2)$-circuit $C_{n,d}$ of size $\Oh(nd)$ as well.

    Now, let us describe vectors $(x_1, \ldots, x_n)$ returned by $\oper{ovToCirc}(A, B)$.
    Let $A = \{a_1, \ldots, a_n\}$ and $B = \{b_1, \ldots, b_n\}$.
    For every $k \in [n], i \in [n], j \in [d]$ we set
    \begin{align*}
        x_k(\inputsimple^A_{i, j}) & \coloneqq a_i(j); \\
        x_k(\inputsimple^B_{i, j}) & \coloneqq b_i(j); \\
        x_k(\inputswitch^A_i) & \coloneqq \begin{cases}
         1 & \text{if $i = k$,} \\
         0 & \text{if $i \neq k$.}
        \end{cases}
    \end{align*}
    First, observe that for $k, k' \in [n]$ vectors $x_k$ and $x_{k'}$ can differ only on input gates $\inputswitch^A_k$ and $\inputswitch^A_{k'}$.
    Hence, instead of returning all the vectors $x_1, \ldots, x_n$ (whose total bit size is $\Oh(n^2d)$) it is enough to return the vector $x_1$ (of bit size $\Oh(nd)$) and $n-1$ sets of size at most $2$ representing the differences between $x_i$ and $x_1$, for $i = 2, \ldots, n$. 

    It remains to be proved that circuit $C_{n,d}$ together with vectors $x_1, \ldots, x_n$ have the desired property related to orthogonal vectors in sets $A$ and $B$.
    To this end, we prove the following.

    \begin{claim}\label{claim:ov-to-circ-equiv}
    For every $k \in [n]$, $C_{n,d}(x_k) = 0$ if and only if there exists $b \in B$ such that $\scalar{a_k}{b} = 0$.
    \end{claim}

    \begin{claimproof}
    Fix an integer $k \in [n]$.
    The purpose of input gates $\inputswitch^A_{(\cdot)}$ is to pick one of the vectors from set $A$.
    Indeed, observe that by definition, for every $i \in [n]$ and $j \in [d]$,
    \begin{align*}
        \inputprim^A_{i,j}(x_k) & = \inputsimple^A_{i,j}(x_k) \land \inputswitch^A_i(x_k) \\
        & = a_i(j) \land \inputswitch^A_i(x_k) = \begin{cases}
         a_k(j) & \text{if $i = k$,} \\
         0 & \text{if $i \neq k$.}
        \end{cases}
    \end{align*}
    Consequently, for every $j \in [d]$ we have
    \[
    \choice_j^A(x_k) = \inputprim^A_{1, j}(x_k) \lor \inputprim^A_{2, j}(x_k) \lor \ldots \lor \inputprim^A_{n, j}(x_k) = a_k(j),
    \]
    so the vector $a_k$ is \emph{copied} to the gates $\choice^A_{(\cdot)}$.

    Next, for every $i \in [n]$ and $j \in [d]$,
    \[
        \compare_{i,j}(x_k) = \choice^A_j(x_k) \land \inputsimple^B_{i, j}(x_k) = a_k(j) \land b_i(j)
    \]
    Hence, if we take an OR over all values $\compare_{i,(\cdot)}(x_k)$ we obtain $1$ if and only if vectors $a_k$ and $b_i$ are \emph{not} orthogonal to each other.
    This is done precisely by gates $\orthogonal_i$ for every $i \in [n]$.

    Finally, the output gate is an AND over all the gates $\orthogonal_i$, so $\result(x_k) = 0$ if and only if $\orthogonal_i(x_k) = 0$ for some $i \in [n]$. This is equivalent to saying that vectors $a_k$ and $b_i$ are orthogonal for some $i \in [n]$.
    This ends the proof of the claim.
    \end{claimproof}

    \cref{claim:ov-to-circ-equiv} immediately implies that $C_{n,d}(x_k) = 0$ for some $k \in [n]$ if and only if $(d, A, B)$ is a~\yesinstance of \ov.
    This concludes the proof.
}
\end{proof}

Having proved \cref{lem:ov-to-circuit} we are ready to show the promised SETH-based lower bound for \dyncir.

\begin{proof}[Proof of \cref{lem:circuit-lower-bound}]
    Suppose that, for some $k \in \N$ and $\eps \in (0, 1]$, there exists a data structure $\Dcirc$ for \dyncir with initialization time $\Oh(m^k)$ and amortized update time $\Oh(m^{1-\eps})$.
    We will show that using $\Dcirc$ one can solve $\ov$ in dimension $d \leq \Oh(n^{\delta})$ in time $\Oh(n^{2-\eps'})$, where $\delta \coloneqq \frac{\eps}{2k}$ and $\eps' \coloneqq \frac{\eps^2}{2k}$.
    This will contradict \cref{fact:ov-lb}.

    Let $(d, A, B)$ be an instance of the $\ov$ problem.
    Denote $n \coloneqq |A| = |B|$, and assume that $d \leq \Oh(n^{\delta})$, where $\delta = \frac{\eps}{2k}$.
    Our algorithm proceeds as follows.

    Let $\Dovtocirc$ be the data structure from \cref{lem:ov-to-circuit}.
    First, we initialize $\Dovtocirc$ with parameters $(n', d)$, where $n' \coloneqq \ceil{n^{\frac{1}{k}}}$.
    The call $\oper{init}(n', d)$ returns a monotone circuit $C_{n', d}$ with $m \in \Oh(n' d)$ gates.
    Second, we initialize $\Dcirc$ with the circuit $C_{n', d}$.
    The total running time of both initialization steps is
    \[
        \Oh(n' d) + \Oh(m^k) \leq \Oh_k\left((n'd)^k\right) \leq \Oh_k\left((n^{\frac{1}{k} + \frac{\eps}{2k}})^k\right) = \Oh_k\left(n^{1 + \frac{\eps}{2}}\right) \leq \Oh \left(n^{2-\eps'}\right),
    \]
    since $k$ is a universal constant here.

    Next, we cover the sets $A$ and $B$ with their subsets: $A = A_1 \cup \ldots \cup A_s$ and $B = B_1 \cup \ldots \cup B_s$, where $s \leq \Oh(n^{1-\frac{1}{k}})$ and for every $i \in [s]$, $|A_i| = |B_i| = n' = \ceil{n^{\frac{1}{k}}}$.
    Note that since all the sets $A_i$ are required to be of the same size, they are not necessarily disjoint.
    We iterate over all possible values of  $i, j \in [s]$.
    Having $(i, j)$ fixed, we want to find out whether the sets $A_i$ and $B_j$ contain two vectors orthogonal to each other.
    To this end, we run $\oper{ovToCirc}(A_i, B_j)$ on $\Dovtocirc$.
    This call runs in time $\Oh(n'd)$, and returns a sequence $(x_1, \ldots, x_{n'})$ of input vectors to $C_{n',d}$.
    By~\cref{lem:ov-to-circuit}, to establish whether $(d, A_i, B_j)$ is a~\yesinstance of \ov it is enough to evaluate $C_{n', d}(x_k)$ for every $k \in [n']$.
    Now, we make use of the data structure $\Dcirc$. First, we evaluate $C_{n',d}(x_1)$ by setting all input bits to match $x_1$, which requires at most $|\cinputs(C_{n',d})| \in \Oh(n'd)$ bit flips from the previous state.
    Then, we change the input to $x_2, \ldots, x_{n'}$ consecutively, and by \cref{lem:ov-to-circuit} each such update requires flipping at most $2+2=4$ bits.
    Hence, to compute all the outputs $C_{n', d}(x_1), \ldots, C_{n', d}(x_{n'})$ it is enough to make $\Oh(n'd) + \Oh(n') \leq \Oh(n'd)$ updates on $\Dcirc$.
    The running time of these updates is upper bounded by
    \[
        \Oh(n'd) \cdot \Oh(m^{1-\eps}) \leq \Oh\left((n'd)^{2-\eps}\right) \leq \Oh\left((n^{\frac{1}{k} + \frac{\eps}{2k}})^{2-\eps}\right) = \Oh\left(n^{\frac{2}{k} - \frac{\eps}{k} + \frac{\eps}{k} - \frac{\eps^2}{2k}}\right) = \Oh\left(n^{\frac{2}{k} - \frac{\eps^2}{2k}}\right).
    \]
    Since we conduct the checks above for each choice of $i, j \in [s]$, the total running time of this phase is at most
    \[
        s^2 \cdot \Oh\left(n^{\frac{2}{k} - \frac{\eps^2}{2k}}\right) \leq \Oh\left((n^{1-\frac{1}{k}})^2\right) \cdot \Oh\left(n^{\frac{2}{k} - \frac{\eps^2}{2k}}\right) = \Oh\left(n^{2-\frac{2}{k} + \frac{2}{k} - \frac{\eps^2}{2k}}\right) = \Oh \left(n^{2-\eps'}\right),
    \]
    as desired.
    Finally, if $(d, A_i, B_j)$ is a~\noinstance for every $i, j \in [s]$, we can report that $(d, A, B)$ is a~\noinstance as well.
    The correctness of our algorithm is straightforward, and its running time contradicts \cref{fact:ov-lb}. This completes the proof.
\end{proof}

\section{From Dynamic Circuit Evaluation to Dynamic Degeneracy}
\label{sec:circuit-to-deg}

In this section, we prove the main result of our work, namely \cref{thm:main}.
To this end, we provide an efficient reduction from \dyncir, whose conditional hardness was shown in \cref{sec:ov-to-circuit}, to \dyndeg{\eps}{D}.
We start with one more auxiliary fact that will simplify descriptions of gadgets later on.

\begin{lemma}
\label{lem:circ-to-deg-helper}
Let $C$ be a monotone Boolean $(2,2)$-circuit of size $m$ and $x \colon \cinputs(C) \to \{0,1\}$ be an input vector.
Then, there exist a monotone Boolean $(2,2)$-circuit $C'$ of size $\Oh(m)$ and an input vector $x' \colon \cinputs(C') \to \{0,1\}$ such that:
\begin{itemize}
    \item $C(x) = C'(x')$;
    \item every input gate of $C'$ has outdegree $1$; and
    \item every OR-gate of $C'$ has indegree $2$ and outdegree $1$.
\end{itemize}
Both $C'$ and $x'$ can be computed in time $\Oh(m)$. Moreover, if $x$ is a dynamic vector subject to bit flips, then every bit flip in $x$ can be translated to $\Oh(1)$ bit flips in $x'$ such that the equality $C(x) = C'(x')$ is preserved.
\end{lemma}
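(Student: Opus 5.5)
We prove the statement by applying a short sequence of local rewriting operations to $C$. Each operation preserves monotonicity, the $(2,2)$ property and the output value. Each one increases the size by only $\Oh(1)$ per gate or wire it touches, so the final circuit $C'$ has size $\Oh(m)$ and is built in time $\Oh(m)$. The input vector $x'$ is simply $x$ on the original input gates, possibly together with a few extra constant input gates whose values never change. A bit flip of $x$ at an input $g$ therefore becomes a flip of the single corresponding bit of $x'$. This gives the dynamic part of the statement with $\Oh(1)$ flips per update. Throughout we use the observation that an AND-gate with a single incoming wire acts as a copy gate, as noted after the definition of $(2,2)$-circuits.

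\emph{Step 1 (input gates of outdegree $1$).} Consider an input gate $g$ with outdegree $2$, with wires to $h_1,h_2$. We add a fresh copy gate $c_g=\text{AND}(g)$, remove the two wires $g\to h_1$ and $g\to h_2$, and add wires $c_g\to h_1$ and $c_g\to h_2$. Now $g$ has outdegree $1$, and $c_g$ has indegree $1$ and outdegree $2$. Input gates of outdegree $0$ can simply be kept; alternatively, we feed each of them into a dummy AND-gate whose output is ANDed with the true output together with a constant-$1$ input. Since every input gate is then routed through a copy gate, its bit propagates unchanged.

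\emph{Step 2 (OR-gates with indegree exactly $2$).} An OR-gate $g$ with indegree $1$ is a copy gate, so we relabel it as an AND-gate. An OR-gate with indegree $0$ (if the definition permits one) outputs $0$; we replace it by an OR of two new constant-$0$ input gates, or equivalently treat it through the padding step below.

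\emph{Step 3 (OR-gates with outdegree $1$).} Consider an OR-gate $g$ with outdegree $2$, with wires to $h_1,h_2$. We keep $g$ with a single outgoing wire to a new copy gate $c=\text{AND}(g)$, and then send the wires $c\to h_1$ and $c\to h_2$. This step introduces only AND-gates, so it does not disturb the OR-gates fixed in Step~2. It also keeps all in- and outdegrees at most $2$.

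\emph{Bookkeeping.} The main obstacle is purely bookkeeping: making sure the steps do not undo one another and that constant inputs are handled consistently. For the constants, we add new input gates of fixed value $0$; each such gate has outdegree exactly $1$, which requires one fresh constant gate per use, and $\Oh(m)$ of them suffice. These constant gates are never flipped. Carrying out the three steps in the order given, each step only creates AND-gates and rewires existing wires, so none of them reintroduces a violation fixed earlier, and the lemma follows.
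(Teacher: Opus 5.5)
Your proof is correct, and for OR-gates it matches the paper's. An OR-gate of indegree $1$ is relabelled as an AND copy gate. An OR-gate of outdegree $2$ becomes an OR-gate of outdegree $1$ feeding an AND copy gate that takes over both outgoing wires.

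The real difference is Step~1. For an input gate $p$ of outdegree $2$, the paper inserts no copy gate. It replaces $p$ by two independent input gates $p_1,p_2$ with $x'(p_1)=x'(p_2)=x(p)$, each taking one of the two wires, so a flip of $x(p)$ becomes two flips in $x'$. Your copy gate $\text{AND}(p)$ keeps a one-to-one correspondence between original inputs and the non-constant inputs of $C'$. Each flip therefore becomes exactly one flip, at the cost of one extra AND-gate per such input. This is harmless later, since the AND-gadget of the next lemma explicitly allows a single incoming wire.

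One slip concerns inputs of outdegree $0$. ``Simply keeping'' them breaks the required property that every input gate of $C'$ has outdegree exactly $1$. Your dummy gadget, as written, ANDs $g$ with the true output and a constant $1$, so it computes $x(g)\wedge C(x)$ and changes the output; you would need an OR with a constant-$1$ input instead. The cleanest fix is the paper's: delete such inputs, since they cannot affect the output. Strictly speaking, under the paper's definition of a circuit (a unique sink, the output gate, and every source is an input gate), neither outdegree-$0$ input gates nor indegree-$0$ OR-gates can occur. Those cases, and the constant-$0$ padding you introduce for them, are therefore vacuous.
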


\begin{proof}
We construct $C'$ and $x'$ from $C$ and $x$ by applying local modifications to the input gates and the OR-gates of $C$.

For every input gate $p \in \cinputs(C)$:
\begin{itemize}
    \item if $\operatorname{outdeg}(p) = 0$, we remove $p$ from $C$ entirely, as an isolated input does not affect the logical evaluation of $C$;
    \item if $\operatorname{outdeg}(p) = 1$, we retain $p$ in $C'$ with $x'(p) = x(p)$;
    \item if $\operatorname{outdeg}(p) = 2$ with outgoing wires $pu_1$ and $pu_2$, we replace $p$ with two independent input gates $p_1$ and $p_2$, setting $x'(p_1) = x'(p_2) = x(p)$ and replacing the wires with $p_1u_1$ and $p_2u_2$. 
\end{itemize}
Consequently, every input gate in $C'$ has outdegree exactly $1$.

For every OR-gate $h \in \cgates(C)$:
\begin{itemize}
    \item if $h$ has indegree $1$, we replace $h$ with an AND-gate $h'$ having the same ingoing and outgoing wires. Note that OR and AND gate on a single input are equivalent: they just copy the incoming bit;
    \item if $h$ has outdegree $2$, we replace $h$ with an OR-gate $h_1$ (with outdegree $1$) feeding into an AND-gate $h_2$ (with outdegree $2$). Gate $h_1$ inherits the incoming wires of $h$, sends its output to $h_2$, while $h_2$ inherits the outgoing wires of $h$.
\end{itemize}

It is easy to verify that these substitutions preserve the circuit evaluation at every step, ensuring $C(x) = 1$ if and only if $C'(x') = 1$.
Each gate in $C$ is replaced by at most $\Oh(1)$ gates and wires, so $|C'| \leq \Oh(m)$, and both $C'$ and $x'$ can be computed in $\Oh(m)$ time.
\end{proof}

\cref{lem:circ-to-deg} encapsulates the crucial ingredients of the reduction from \dyncir to \dyndeg{\eps}{D}.

\begin{lemma}
\label{lem:circ-to-deg}
For every integer $\ell \geq 2$, there exists a data structure $\Dcirctodeg_\ell$ that given a~fixed upon initialization monotone Boolean $(2,2)$-circuit $C$ and a~dynamic input vector $x$, maintains a graph $G_{C, x}$ that satisfies the conditions:
\begin{itemize}
\item $C(x) = 1$ if and only if $\deg(G_{C, x}) \leq \ell$; and
\item $C(x) = 0$ if and only if $2\ell -1 \leq \deg(G_{C, x}) \leq \maxdeg$.
\end{itemize}
The data structure $\Dcirctodeg_\ell$ supports the following methods:
\begin{itemize}
\item $\oper{init}(C, x_0)$: Given a monotone Boolean $(2,2)$-circuit $C$ of size $m$ and an initial input vector $x_0$ returns an initial graph $G_{C, x_0}$, where $|V(G_{C, x_0})| \leq \Oh(\ell^2 \cdot m)$ and $|E(G_{C, x_0})| \leq \Oh(\ell^3 \cdot m)$.
Runs in time $\Oh(\ell^3 \cdot m)$.
\item $\oper{flip}(p)$: Let $C$ and $x$ be the circuit for which the data structure was initialized and the current input vector.
This method returns a graph $G_{C, x'}$ as a sequence of $\Oh(\ell^2)$ edge updates to $G_{C, x}$ where $x'$ is the vector obtained from $x$ by flipping its bit corresponding to $p$, i.e. $x'(p) = 1 - x(p)$, and $x'(q) = x(q)$ for $q \neq p$.
Additionally, $x'$ is set as the new input vector to $C$.
Method $\oper{flip}(p)$ runs in time $\Oh(\ell^2)$.
\end{itemize}
\end{lemma}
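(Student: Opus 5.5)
We first apply \cref{lem:circ-to-deg-helper}, so that every input gate has outdegree $1$ and every OR-gate has indegree $2$ and outdegree $1$. Each flip in $x$ then becomes $\Oh(1)$ flips in $x'$. The graph $G_{C,x}$ is built from gadgets, one per wire and one per gate, in the spirit of the $\mathsf{P}$-hardness proof of Pilipczuk, Siebertz and Toru\'nczyk. The intended semantics is that the value $1$ on a wire means its gadget \emph{$\ell$-unravels} once its source gate has unravelled, while the value $0$ means it stays stuck.

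\textbf{The wire gadget.} A wire $uv$ is represented by a vertex $w_{uv}$ together with a set of ``anchor'' vertices. The anchors are chosen so that $w_{uv}$ has degree exactly $\ell+1$ until the gadget of $u$ has been removed, and degree $\le \ell$ afterwards.

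\textbf{The gate gadgets.} An AND-gate $g$ with in-wires $e_1,e_2$ is a vertex adjacent to $w_{e_1},w_{e_2}$ and to $\ell-1$ private padding vertices, made into cliques of size about $\ell$ so that they never unravel on their own below threshold $2\ell-1$. Such a vertex reaches degree $\le \ell$ only after both in-wire vertices are removed. An OR-gate vertex instead has degree $\ell+1$ with two in-edges, so it unravels after either in-wire is removed. The fan-out of at most $2$ is handled by the out-wires hanging off the gate vertex.

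\textbf{Input gates and flips.} An input gate $p$ is a vertex whose value is encoded by the presence or absence of $\Oh(\ell)$ edges to a dense blocker structure. When $x(p)=1$ the vertex is free and has degree $\le \ell$; when $x(p)=0$ it is attached to a $K_{2\ell}$-like core. Each flip therefore toggles $\Oh(\ell^2)$ edges at most, which gives the $\Oh(\ell^2)$ update bound.

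\textbf{The output gate.} The output gate is attached to a core $H$, for instance a $(2\ell-1)$-regular-ish dense graph on $\Oh(\ell^2)$ vertices such as $K_{2\ell}$ minus a matching. The core $H$ has degeneracy between $2\ell-1$ and $3\ell$ on its own, and it $\ell$-unravels exactly once the output vertex has been removed and has freed the boundary. The $\Oh(\ell^3 m)$ edge count and the $\Oh(\ell^2 m)$ vertex count come from each gadget using $\Oh(\ell^2)$ vertices of degree $\Oh(\ell)$.

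\textbf{Correctness.}
\begin{itemize}
\item If $C(x)=1$, we peel along a topological order of the gates that evaluate to $1$, removing wire vertices and then gate vertices. All remaining padding is arranged so that it unravels at degree $\le \ell$ once its attached gate is gone, and everything else unravels at the end. This shows $\deg(G_{C,x}) \le \ell$.
\item If $C(x)=0$, we exhibit the subgraph consisting of all gadgets whose gates evaluate to $0$, together with the output core. We then check that its minimum degree is at least $2\ell-1$, using monotonicity: a $0$-valued AND has some $0$-valued input, and a $0$-valued OR has both inputs $0$.
\item For the upper bound, every vertex has degree at most $3\ell$, since gates have in- and outdegree $\le 2$, so $\deg(G_{C,x}) \le 3\ell$ trivially.
\end{itemize}

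\textbf{Main obstacle.} The hardest part is this $0$-case lower bound of $2\ell-1$, which is where the gap between $\ell$ and $2\ell-1$ is created. The minimum degree must be at least $2\ell-1$ in the $0$-subgraph, yet at most $\ell$ along the peeling order in the $1$-case. The plan is to replace each single gate vertex by a block of $\ell$ vertices and each wire by a complete bipartite connection $K_{\ell,\ell}$ between blocks. In the $1$-case, a block loses all $\ell$ neighbours from an incoming $1$-block at once. In the $0$-case, each vertex of a stuck block sees $\ell$ vertices from the next block and $\ell-1$ from its own clique or previous block, for $2\ell-1$ in total. We must verify this degree count separately for AND, OR and input blocks; this is why OR-gates are required to have outdegree $1$, since the tight count would otherwise break.
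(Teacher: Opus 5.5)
There is a genuine gap. No concrete construction is ever fixed, and the mechanisms you do name contradict the requirement $\deg(G_{C,x})\le\ell$ when $C(x)=1$. Degeneracy is monotone under subgraphs, so no part of $G_{C,x}$ that is present regardless of $x$ may have degeneracy above $\ell$. An output core with ``degeneracy between $2\ell-1$ and $3\ell$ on its own'', or a $K_{2\ell}$-like blocker for the inputs, therefore forces $\deg(G_{C,x})\ge 2\ell-1$ for every $x$. In particular, such a core cannot $\ell$-unravel once the output vertex (which lies outside it) is deleted. Padding cliques of size about $\ell$ fail the other way: their vertices have degree about $\ell$, so they cannot lie in a subgraph of minimum degree $2\ell-1$ and contribute nothing in the $0$-case.

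Your instinct that the inputs should be blocked by something released at the output is right. However, the subgraph of minimum degree $2\ell-1$ must be a \emph{cycle of support} built entirely from pieces that each $\ell$-unravel once their predecessor is gone. The paper closes this cycle as follows. Each vertex of the input gadget $V^{\text{in}}_p$ (an independent set of size $\ell$) gets exactly $\ell$ edges into the output gadget, plus $\ell-1$ edges to its out-wire when $x(p)=0$. The output gadget is a sparse layered path held up only by the output wire, so the loop breaks exactly when a $1$ reaches the output.

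The gate gadgets do not work either. As you note, single vertices of degree about $\ell+1$ cannot lie in a subgraph of minimum degree $2\ell-1$ once $\ell\ge 3$. But blocks of $\ell$ vertices joined by $K_{\ell,\ell}$ do not propagate a $1$ at threshold $\ell$. An OR-block vertex fully joined to two in-blocks and an out-block keeps $2\ell$ neighbours after one in-block is deleted. An AND-block vertex with two out-blocks keeps $2\ell$ after both in-blocks are deleted. A vertex seeing $\ell$ vertices of the next block and $\ell-1$ of its own clique keeps $2\ell-1$.

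The missing idea is a ``domino'' path. Wires are independent sets of $\ell-1$ vertices, each with at most $\ell$ neighbours in the source gadget and exactly $\ell-1$ in the target gadget. The AND gadget is a path of $6\ell-3$ completely joined layers, with sizes cycling through $\ell-1,\ell,\ell$ and a $K_\ell$ at the end. Out-wire edges are attached at most one per vertex, and only to layers $L_{3j}$, whose successor has size $\ell-1$. Then every vertex has at least $2\ell-1$ neighbours even when only one in-wire is present, but at most $\ell$ once everything before it is gone. The $2\ell-2$ layers $L_{3j}$ give exactly the $2\ell(\ell-1)$ attachment slots that fan-out $2$ needs. The output gadget is the same path with $3\ell r+3$ layers ($r$ the number of inputs), and its slots carry the feedback edges.

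An OR gadget needs $\ell$ layers of size $\ell-1$ with the two in-wires at opposite ends, so the cascade can start from either end, and each vertex sends exactly one edge to the out-wire. This is where OR outdegree $1$ is used.

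Finally, the $0$-case witness must be the justifying structure traced backwards from the output: both in-wires of a $0$-valued OR, and one $0$-valued in-wire of an AND. Taking all $0$-valued gadgets does not work, because a $0$-valued gate may feed a $1$-valued OR, and the wire between them then loses its target-side neighbours.
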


\begin{proof}
{

\newcommand{\Vwire}{V^{\text{wire}}}
\newcommand{\Vin}{V^{\text{in}}}
\newcommand{\Vout}{V^{\text{out}}}
\newcommand{\Vor}{V^{\text{or}}}
\newcommand{\Vand}{V^{\text{and}}}
\newcommand{\Ein}{E^{\text{in}}}
\newcommand{\Eout}{E^{\text{out}}}
\newcommand{\Eor}{E^{\text{or}}}
\newcommand{\Eand}{E^{\text{and}}}
\newcommand{\Forr}{F^{\text{or}}}
\newcommand{\Fand}{F^{\text{and}}}
\newcommand{\Kand}{K^{\text{and}}}
\newcommand{\Fout}{F^{\text{in-out}}}

Fix a monotone $(2,2)$-circuit $C$ of size $m$ together with an input vector $x$ that are provided to the data structure $\Dcirctodeg_\ell$.
By running \cref{lem:circ-to-deg-helper} on the circuit $C$ and the initial vector $x$ we can assume that: (1) every input gate of $C$ has outdegree $1$; (2) every OR-gate of $C$ has indegree~$2$ and outdegree $1$.

We describe now the graph $G_{C,x}$, and from its definition it will follow immediately how to initialize and update it within the desired time.
Since $C$ and $x$ are fixed, we write $G$ as an abbreviation for $G_{C,x}$.
Let $\orgates(C), \andgates(C)$ be the sets of all OR-gates and AND-gates in $C$, respectively.
We define the sets of vertices and edges of $G$ as partitioned into smaller sets where each smaller set corresponds to a different part of the circuit $C$:
\begin{align*}
V(G) \coloneqq & \left(\biguplus_{e \in \cwires(C)} \Vwire_{e}\right) \uplus \left(\biguplus_{p \in \cinputs(C)} \Vin_p\right) \uplus \\
& \left(\biguplus_{h \in \orgates(C)} \Vor_h\right) \uplus \left(\biguplus_{h \in \andgates(C)} \Vand_h\right) \uplus \Vout,\\
E(G) \coloneqq & \left(\biguplus_{p \in \cinputs(C)} \Ein_p\right) \uplus \left(\biguplus_{h \in \orgates(C)} \Eor_h\right) \uplus \left(\biguplus_{h \in \andgates(C)} \Eand_h\right) \uplus \Eout.
\end{align*}
These sets will have the following properties:
\begin{itemize}
\item For every $e \in \cwires(C)$, $\Vwire_e$ is an independent set consisting of $\ell-1$ vertices in~$G$.
\item For every $p \in \cinputs(C)$, $\Vin_p$ is an independent set consisting of $\ell$ vertices in~$G$.
Additionally, vertices of $\Vin_p$ can be incident only to the edges of $\Ein_p \cup \Eout$.
\item For every $h \in \orgates(C)$, $\Vor_h$ consists of $\Oh(\ell^2)$ vertices, and $\Eor_h$ is the set of all edges incident to $\Vor_h$.
\item Similarly, for every $h \in \andgates(C)$, $\Vand_h$ consists of $\Oh(\ell^2)$ vertices, and $\Eand_h$ is the set of all edges incident to $\Vand_h$.
\item The set $\Vout$ contains $\Oh(\ell^2 \cdot m)$ vertices, and $\Eout$ is the set of all edges incident to $\Vout$.
\end{itemize}

Before we delve into the formal details of each gadget, let us give a high-level overview of the reduction using the notation introduced above. The overall structure of the graph $G = G_{C,x}$ closely mirrors the topology of $C$. The core mechanism relies on how degeneracy ``propagates'' through $G$ based on the dynamic evaluation of $C(x)$:
\begin{itemize}
    \item When $C(x) = 1$, we construct a complete $\ell$-degeneracy ordering for $G$. Starting from the input wire sets $\Vwire_f$ corresponding to inputs $p$ with $x(p) = 1$, we can sequentially remove vertices that, at the moment of their removal,  have degree at most $\ell$ (that is, the consecutive gadgets \emph{$\ell$-unravel}). This unraveling propagates forward through the gadgets of gates evaluating to $1$, eventually reaching the output wire $e$ and collapsing $\Vout$. Once $\Vout$ is removed, the remaining input sets $\Vin_p$ for inputs with $x(p) = 0$ lose their edges to $\Vout$, dropping their degree to at most $\ell$, which triggers the collapse of all remaining $0$-valued gadgets and leaves $G$ completely unraveled.
    \item When $C(x) = 0$, the $0$-signal reaches the output wire $e$. By tracing the $0$-evaluations backward from the output, we can extract a ``justifying tree'' of $0$-evaluating wires and gadgets. The output gadget $\Vout$, together with the gadgets corresponding to this justifying tree, forms a self-sustaining induced subgraph $H \subseteq G$ in which every vertex maintains a degree of at least $2\ell - 1$. Thus, $\deg(G) \ge \delta(H) \ge 2\ell - 1$.
\end{itemize}
We remark that this mechanism is inspired by the proof presented in \cite[Appendix~A]{PilipczukST18a}, but the idea from there is generalized here to work for any fixed $\ell$. This is to create an approximation gap arbitrarily close to $2$.

Throughout the construction and the analysis of all the gadgets, we will maintain and explicitly rely on the following invariant for every wire vertex set:
\begin{quote}
\textbf{Property ($\star$).}
For every wire $e = (g,h) \in \cwires(C)$, the set $\Vwire_e$ consists of $\ell - 1$ vertices. Every vertex $v \in \Vwire_e$ satisfies the following:
\begin{enumerate}
    \item[(i)] $v$ is adjacent to at most $\ell$ vertices from the gadget corresponding to $g$: $\Vin_g$ (if $g$ is an input gate), $\Vor_g$ (if $g$ is an~OR-gate), or $\Vand_g$ (if $g$ is an~AND-gate); and
    \item[(ii)] $v$ is adjacent to exactly $\ell-1$ vertices from the gadget corresponding to $h$: $\Vout$ (if $h$ is the output gate), $\Vor_h$ (if $h$ is an~OR-gate) or $\Vand_h$ (if $g$ is an~AND-gate).
\end{enumerate}
In particular, this invariant guarantees that every vertex $v \in \Vwire_e$ has total degree at most $d_G(v) \leq \ell + (\ell - 1) = 2\ell - 1$.
\end{quote}

\noindent We are ready to present the detailed description of every gadget.

\paragraph*{Input gadgets.}{
Let $p \in \cinputs(C)$ be an input gate.
By the initial transformation of $C$, we know that $p$ has exactly one outgoing wire in $C$.
Denote this wire by $f$.

Recall that $\Vin_p$ is defined as an independent set of size $\ell$ in~$G$. The edge set of the input gadget is defined as follows:
\[
\Ein_p=\begin{cases}
			\Vin_p \times \Vwire_f, & \text{if $x(p)=0$,}\\
            \varnothing, & \text{if $x(p)=1$.}
		 \end{cases}
\]

Note that such a definition of the input gadget preserves Property~($\star$) for the set $\Vwire_f$ (since $|\Vin_p| = \ell$).
Additionally, every vertex $v$ of $\Vin_p$ is incident to exactly $\ell$ edges of $\Eout$ which connect $v$ with some vertices of $\Vout$.
The exact connections will be provided in the description of the output~gadget.

\begin{claim}\label{clm:input-unravel} We have the following:
    \begin{itemize}
        \item Assume that $x(p) = 1$. Then the set $\Vwire_f$ $\ell$-unravels in $G$.
        \item Let $G' \coloneqq G - \Vout$. Then the set $U \coloneqq \Vin_p \cup \Vwire_f$ $\ell$-unravels in $G'$.
    \end{itemize}
\end{claim}
\begin{claimproof}
For the first part, if $x(p) = 1$, we have $\Ein_p = \varnothing$, so there are no edges between $\Vin_p$ and $\Vwire_f$. Then, by Property~($\star$), every vertex of $\Vwire_f$ has degree $\ell-1$ in $G$, and thus the entire set $\Vwire_f$ can be $\ell$-unravelled from $G$ in an arbitrary order.

For the second part, in $G' = G - \Vout$, all the edges of $\Eout$ are removed, so every vertex $v \in \Vin_p$ is adjacent only to $\Vwire_f$, giving $d_{G'}(v) \leq |\Vwire_f| = \ell - 1$. Thus, all vertices of $\Vin_p$ can be $\ell$-unraveled first. Once $\Vin_p$ is removed, $\Vwire_f$ can be $\ell$-unraveled as in the first part of the claim.
\end{claimproof}

\begin{claim}\label{clm:input-bounds} We have the following:
    \begin{itemize}
        \item For every $v \in \Vin_p$, $d_G(v) \leq  2\ell-1$.
        \item Assume that $x(p) = 0$. Let $H \coloneqq G[\Vin_p \cup \Vout \cup \Vwire_f]$. Then, for every $v \in \Vin_p$, $d_H(v) \geq 2\ell-1$.
    \end{itemize}
\end{claim}
\begin{claimproof}
Every vertex $v \in \Vin_p$ is incident to $\ell$ edges of $\Eout$ (connecting $v$ to $\Vout$) and at most $|\Vwire_f| = \ell - 1$ edges of $\Ein_p$. Thus, $d_G(v) \le \ell + (\ell - 1) =2\ell-1$.

When $x(p) = 0$, we have $\Ein_p = \Vin_p \times \Vwire_f$. Every vertex $v \in \Vin_p$ is adjacent to all the $\ell - 1$ vertices of $\Vwire_f$ and to $\ell$ vertices in $\Vout$. Since all these neighbors lie in $V(H)$, we obtain $d_H(v) = \ell + (\ell - 1) = 2\ell - 1$.
\end{claimproof}
}

\paragraph*{OR gadgets.}{
Consider an~OR-gate $h \in \orgates(C)$.
By the initial simplification we run on $C$, we know that $h$ has two input wires $e_1, e_2$, and one output wire $f$.
For such a~gate $h$, we define the vertex set of corresponding OR-gadget:
\[
\Vor_h \coloneqq L^h_1 \uplus L^h_2 \uplus \ldots \uplus L^h_\ell
\]
as partitioned into $\ell$ layers, where each layer $L_i^h$ contains different $\ell-1$ vertices of $G$.
Note that $|\Vor_h| = \ell \cdot (\ell - 1) \leq \Oh(\ell^2)$ as desired.
The edges of the gadget are defined as follows:
\[
\Eor_h \coloneqq \left(\Vwire_{e_1} \times L_1^h\right) \cup \left(\Vwire_{e_2} \times L_{\ell}^h\right) \cup \left(\bigcup_{i=1}^{\ell-1} L_i^h \times L_{i+1}^h\right) \cup \Forr_h,
\]
where $\Forr_h \subseteq \Vor_h \times \Vwire_f$ is an arbitrary set of edges satisfying the following properties: (1) every vertex of $\Vor_h$ is incident to exactly one edge of $\Forr_h$; (2) every vertex of $\Vwire_f$ is incident to exactly $\ell$ edges of $\Forr_h$.
See \cref{fig:or-gadget} for an~example OR-gadget, and observe that $|\Eor_h| \leq \Oh(\ell^3)$.

\begin{figure}[ht]
\centering
\begin{tikzpicture}[
    vertex/.style={circle, fill=black, inner sep=1.8pt},
    layerblock/.style={draw=blue!70!black, thick, rounded corners=5pt, fill=blue!5, inner sep=3pt},
    wireblock/.style={draw=teal!70!black, thick, rounded corners=5pt, fill=teal!5, inner sep=3pt},
    outblock/.style={draw=purple!70!black, thick, rounded corners=5pt, fill=purple!5, inner sep=3pt},
    std edge/.style={gray!60, thin},
    grey edge/.style={gray!60, thick, dashed}
]
    \node[vertex] (u1) at (0, 0.6) {};
    \node[vertex] (u2) at (0, 0) {};
    \node[vertex] (u3) at (0, -0.6) {};
    \node[wireblock, fit=(u1)(u3), label=above:{$\Vwire_{e_1}$}] (box_e1) {};

    \node[vertex] (w1) at (10, 0.6) {};
    \node[vertex] (w2) at (10, 0) {};
    \node[vertex] (w3) at (10, -0.6) {};
    \node[wireblock, fit=(w1)(w3), label=above:{$\Vwire_{e_2}$}] (box_e2) {};

    \node[vertex] (a1) at (2, 0.6) {};
    \node[vertex] (a2) at (2, 0) {};
    \node[vertex] (a3) at (2, -0.6) {};
    \node[layerblock, fit=(a1)(a3), label=above:{$L_1^h$}] (box_L1) {};

    \node[vertex] (b1) at (4, 0.6) {};
    \node[vertex] (b2) at (4, 0) {};
    \node[vertex] (b3) at (4, -0.6) {};
    \node[layerblock, fit=(b1)(b3), label=above:{$L_2^h$}] (box_L2) {};

    \node[vertex] (c1) at (6, 0.6) {};
    \node[vertex] (c2) at (6, 0) {};
    \node[vertex] (c3) at (6, -0.6) {};
    \node[layerblock, fit=(c1)(c3), label=above:{$L_3^h$}] (box_L3) {};

    \node[vertex] (d1) at (8, 0.6) {};
    \node[vertex] (d2) at (8, 0) {};
    \node[vertex] (d3) at (8, -0.6) {};
    \node[layerblock, fit=(d1)(d3), label=above:{$L_4^h$}] (box_L4) {};

    \node[vertex] (v1) at (3.6, -2.4) {};
    \node[vertex] (v2) at (5.0, -2.4) {};
    \node[vertex] (v3) at (6.4, -2.4) {};
    \node[outblock, fit=(v1)(v3), label=below:{$\Vwire_f$}] (box_f) {};

    \foreach \x in {u1, u2, u3} \foreach \y in {a1, a2, a3} \draw[std edge] (\x) -- (\y);
    \foreach \x in {a1, a2, a3} \foreach \y in {b1, b2, b3} \draw[std edge] (\x) -- (\y);
    \foreach \x in {b1, b2, b3} \foreach \y in {c1, c2, c3} \draw[std edge] (\x) -- (\y);
    \foreach \x in {c1, c2, c3} \foreach \y in {d1, d2, d3} \draw[std edge] (\x) -- (\y);
    \foreach \x in {w1, w2, w3} \foreach \y in {d1, d2, d3} \draw[std edge] (\x) -- (\y);

    \draw[grey edge] (v1) -- (a1);
    \draw[grey edge] (v1) -- (b1);
    \draw[grey edge] (v1) -- (c1);
    \draw[grey edge] (v1) -- (d1);

    \draw[grey edge] (v2) -- (a2);
    \draw[grey edge] (v2) -- (b2);
    \draw[grey edge] (v2) -- (c2);
    \draw[grey edge] (v2) -- (d2);

    \draw[grey edge] (v3) -- (a3);
    \draw[grey edge] (v3) -- (b3);
    \draw[grey edge] (v3) -- (c3);
    \draw[grey edge] (v3) -- (d3);
\end{tikzpicture}
\caption{OR gadget construction for $\ell=4$. Dashed edges represent an example set $\Forr_h$.}
\label{fig:or-gadget}
\end{figure}

Note that such a definition of an OR-gadget preserves Property~($\star$) for the sets $\Vwire_{e_1}$, $\Vwire_{e_2}$ and $\Vwire_{f}$.
Indeed, vertices of $\Vwire_{e_1}$ and $\Vwire_{e_2}$ are adjacent to $\ell-1$ vertices of $L_1^h$ and $L_\ell^h$, respectively, and each of the vertices of $\Vwire_f$ receives $\ell$ incident edges of $\Forr_h$.
Let us make a few more observations about OR-gadgets.

\begin{claim}\label{clm:or-unravel}
    Let $i \in \{1,2\}$ and $G' \coloneqq G - \Vwire_{e_i}$. Then the set $U \coloneqq \Vor_h \cup \Vwire_f$ $\ell$-unravels in~$G'$.
\end{claim}
\begin{claimproof}
    Without loss of generality assume $i = 1$, so $\Vwire_{e_1}$ is removed in $G'$. Consider $v \in L_1^h$. In $G'$, $v$ is adjacent only to $L_2^h$ ($|L_2^h| = \ell - 1$ edges) and $\Vwire_f$ ($1$ edge from $\Forr_h$), so $d_{G'}(v) = (\ell - 1) + 1 = \ell$. Thus, all vertices in $L_1^h$ can be pruned.
    Once $L_1^h$ is removed, any vertex in $L_2^h$ loses its $\ell - 1$ edges to $L_1^h$, leaving only edges to $L_3^h$ ($\ell - 1$ edges) and $\Vwire_f$ ($1$ edge), dropping its degree to $\ell$.
    By induction, layers $L_1^h, L_2^h, \dots, L_\ell^h$ are pruned sequentially. After all vertices of $\Vor_h$ are removed, each vertex in $\Vwire_f$ loses all $\ell$ incident edges from $\Forr_h$, leaving $\ell - 1$ edges to its target gadget (by Property~($\star$)), so $\Vwire_f$ $\ell$-unravels as well. The case $i = 2$ is symmetric, starting the unraveling sequence from $L_\ell^h$.
\end{claimproof}

\begin{claim}\label{clm:or-bounds} We have the following:
    \begin{itemize}
        \item For every $v \in \Vor_h$, $d_G(v)=2\ell-1$.
        \item Let $H \coloneqq G[\Vor_h \cup \Vwire_{e_1} \cup \Vwire_{e_2} \cup \Vwire_f]$ be an induced subgraph of $G$. Then, for every $v \in \Vor_h$, $d_H(v) = 2\ell-1$.
    \end{itemize}
\end{claim}
\begin{claimproof}
    For $v \in L_1^h$, its neighbors in $G$ lie in $\Vwire_{e_1}$ ($\ell - 1$ neighbors), $L_2^h$ ($\ell - 1$ neighbors), and $\Vwire_f$ ($1$ neighbor), so $d_G(v) = 2\ell - 1$. Similarly, for $v \in L_j^h$ ($1 < j < \ell$), its neighbors lie in $L_{j-1}^h$ ($\ell - 1$ neighbors), $L_{j+1}^h$ ($\ell - 1$ neighbors), and $\Vwire_f$ ($1$ neighbors), giving $d_G(v) = 2\ell - 1$. For $v \in L_\ell^h$, neighbors lie in $L_{\ell-1}^h$ ($\ell - 1$ neighbors), $\Vwire_{e_2}$ ($\ell - 1$ neighbors), and $\Vwire_f$ ($1$ neighbor), giving $d_G(v) = 2\ell - 1$. Thus $d_G(v) = 2\ell - 1$ for all $v \in \Vor_h$.

    Since all neighbor sets ($\Vwire_{e_1}, \Vwire_{e_2}, \Vor_h, \Vwire_f$) are contained in $V(H)$, every $v \in \Vor_h$ satisfies $d_H(v) = d_G(v) = 2\ell - 1$.
\end{claimproof}
}

\paragraph*{AND gadgets.}{
Now, let $h \in \andgates(C)$ be an~AND-gate in $C$.
Assume $h$ has two incoming wires $e_1, e_2$ and two outgoing wires $f_1, f_2$. We set $\ell_1 \coloneqq 6\ell-3$ and define:
\[
\Vand_h \coloneqq L^h_1 \uplus L^h_2 \uplus \ldots \uplus L^h_{\ell_1}
\]
as partitioned into $\ell_1$ vertex-disjoint layers, where
\[
|L_i| = \begin{cases}
\ell, & \text{if $i \not\equiv 1 \pmod{3}$},\\
            \ell-1, & \text{if $i \equiv 1 \pmod{3}$}.
\end{cases}
\]
Note that $|\Vand_h| \leq \ell_1 \cdot \ell \leq \Oh(\ell^2)$ as desired.
Let $\mathcal{L}_h \coloneqq \bigcup_{j=1}^{2\ell-2} L^h_{3j}$. We define the edge set of the gadget as follows:
\[
\Eand_h \coloneqq \left(\Vwire_{e_1} \times L_1\right) \cup \left(\Vwire_{e_2} \times L_1\right) \cup \left(\bigcup_{i=1}^{\ell_1-1} L_i \times L_{i+1}\right) \cup \Fand_h \cup \Kand_h
\]
where $\Kand_h$ is the complete set of edges on $L^h_{\ell_1}$, and $\Fand_h \subseteq \mathcal{L}_h \times (\Vwire_{f_1} \cup \Vwire_{f_2})$ is any set of edges satisfying: (1) every vertex of $\mathcal{L}_h$ is incident to at most one edge of $\Fand_h$; (2) every vertex of $\Vwire_{f_1} \cup \Vwire_{f_2}$ is incident to exactly $\ell$ edges of $\Fand_h$.
See \cref{fig:and-gadget} for an example AND-gadget, and observe that $|\Eand_h| \leq \Oh(\ell^3)$.

\begin{figure}[ht]
\centering
\begin{tikzpicture}[
    vertex/.style={circle, fill=black, inner sep=1.4pt},
    layerblock/.style={draw=blue!70!black, thick, rounded corners=4pt, fill=blue!5, inner sep=2pt},
    wireblock/.style={draw=teal!70!black, thick, rounded corners=4pt, fill=teal!5, inner sep=3pt},
    outblock/.style={draw=purple!70!black, thick, rounded corners=4pt, fill=purple!5, inner sep=3pt},
    std edge/.style={gray!60, thin},
    grey edge/.style={gray!60, thick, dashed}
]
    \node[vertex] (u1) at (-1.2, 1.3) {};
    \node[vertex] (u2) at (-1.2, 0.7) {};
    \node[wireblock, fit=(u1)(u2), label=above:{$\Vwire_{e_1}$}] (box_e1) {};

    \node[vertex] (w1) at (-1.2, -0.7) {};
    \node[vertex] (w2) at (-1.2, -1.3) {};
    \node[wireblock, fit=(w1)(w2), label=below:{$\Vwire_{e_2}$}] (box_e2) {};

    \foreach \i in {1,...,14} {
        \pgfmathsetmacro{\xpos}{(\i-1)*0.75}
        \pgfmathparse{mod(\i, 3) == 1 ? 1 : 0}
        \ifnum\pgfmathresult=1
            \node[vertex] (v_\i_1) at (\xpos, 0.4) {};
            \node[vertex] (v_\i_2) at (\xpos, -0.4) {};
            \node[layerblock, fit=(v_\i_1)(v_\i_2)] (box_L_\i) {};
        \else
            \node[vertex] (v_\i_1) at (\xpos, 0.7) {};
            \node[vertex] (v_\i_2) at (\xpos, 0.0) {};
            \node[vertex] (v_\i_3) at (\xpos, -0.7) {};
            \node[layerblock, fit=(v_\i_1)(v_\i_2)(v_\i_3)] (box_L_\i) {};
        \fi
    }

    \pgfmathsetmacro{\xpos}{14*0.75} 
    \node[vertex] (v_15_1) at (\xpos+0.25, 0.6) {};
    \node[vertex] (v_15_2) at (\xpos-0.10, 0.0) {};
    \node[vertex] (v_15_3) at (\xpos+0.25, -0.6) {};
    \node[layerblock, fit=(v_15_1)(v_15_2)(v_15_3)] (box_L_15) {};

    \node[above=2pt of box_L_1] {\tiny $L_1$};
    \node[above=2pt of box_L_2] {\tiny $L_2$};
    \node[above=2pt of box_L_3] {\tiny $L_3$};
    \node[above=2pt of box_L_15] {\tiny $L_{15}$};

    \foreach \x in {u1, u2, w1, w2} {
        \foreach \y in {1, 2} {
            \draw[std edge] (\x) -- (v_1_\y);
        }
    }

    \foreach \i [evaluate=\i as \nexti using int(\i+1)] in {1,...,14} {
        \pgfmathparse{mod(\i, 3) == 1 ? 2 : 3}
        \edef\sizeA{\pgfmathresult}
        \pgfmathparse{mod(\nexti, 3) == 1 ? 2 : 3}
        \edef\sizeB{\pgfmathresult}

        \foreach \a in {1,...,\sizeA} {
            \foreach \b in {1,...,\sizeB} {
                \draw[std edge] (v_\i_\a) -- (v_\nexti_\b);
            }
        }
    }

    \draw[std edge] (v_15_1) -- (v_15_2);
    \draw[std edge] (v_15_2) -- (v_15_3);
    \draw[std edge] (v_15_1) -- (v_15_3);

    \node[vertex] (f1_1) at (2.25, -2.2) {}; 
    \node[vertex] (f1_2) at (3.00, -2.2) {}; 
    \node[outblock, fit=(f1_1)(f1_2), label=below:{$\Vwire_{f_1}$}] (box_f1) {};

    \node[vertex] (f2_1) at (6.75, -2.2) {}; 
    \node[vertex] (f2_2) at (7.50, -2.2) {}; 
    \node[outblock, fit=(f2_1)(f2_2), label=below:{$\Vwire_{f_2}$}] (box_f2) {};

    \draw[grey edge] (v_3_1) -- (f1_1);
    \draw[grey edge] (v_3_2) -- (f1_1);
    \draw[grey edge] (v_3_3) -- (f1_1);

    \draw[grey edge] (v_6_1) -- (f1_2);
    \draw[grey edge] (v_6_2) -- (f1_2);
    \draw[grey edge] (v_6_3) -- (f1_2);

    \draw[grey edge] (v_9_1) -- (f2_1);
    \draw[grey edge] (v_9_2) -- (f2_1);
    \draw[grey edge] (v_9_3) -- (f2_1);

    \draw[grey edge] (v_12_1) -- (f2_2);
    \draw[grey edge] (v_12_2) -- (f2_2);
    \draw[grey edge] (v_12_3) -- (f2_2);
\end{tikzpicture}
\caption{AND gadget construction for $\ell=3$ ($\ell_1 = 15$). Dashed edges represent an example set~$\Fand_h$. The vertices of $L_{\ell_1}=L_{15}$ form a clique of size $\ell = 3$.}
\label{fig:and-gadget}
\end{figure}

Again, observe that Property~($\star$) is preserved by the AND-gadget: vertices of $\Vwire_{e_1}$ and $\Vwire_{e_2}$ are adjacent to $\ell-1$ vertices of $L_1^h$, and vertices of $\Vwire_f$ receive $\ell$ incident edges of $\Fand_h$.
Let us show a few more properties of AND-gadgets, analogously as we did for OR-gadgets.

\begin{claim}\label{clm:and-unravel}
    Let $G' \coloneqq G - (\Vwire_{e_1} \cup \Vwire_{e_2})$. Then the set $U \coloneqq \Vand_h \cup \Vwire_{f_1} \cup \Vwire_{f_2}$ $\ell$-unravels in $G'$.
\end{claim}
\begin{claimproof}
    In $G'$, vertices in $L_1^h$ have lost all edges to $\Vwire_{e_1}$ and $\Vwire_{e_2}$. They are adjacent only to $L_2^h$ ($|L_2^h| = \ell$), so $d_{G'}(v) = \ell$ for all $v \in L_1^h$. Thus $L_1^h$ can be unraveled.
    Once $L_1^h$ is removed, vertices in $L_2^h$ lose their $|L_1^h| = \ell-1$ edges, leaving edges to $L_3^h$ ($|L_3^h| = \ell$), so their degree becomes $\ell$, allowing $L_2^h$ to be unraveled.
    Next, vertices in $L_3^h$ lose their edges to $L_2^h$, leaving $\ell-1$ edges to $L_4^h$ and single edges of $\Fand_h$, giving degree $\ell$ in total, so $L_3^h$ can be unraveled.
    Inductively, all layers $L_1^h, L_2^h, \dots, L_{\ell_1-1}^h$ are unraveled sequentially.
    Then, vertices of $L_{\ell_1}$ have degree $\ell-1$ each, so the set $L_{\ell_1}$ can be removed from the graph.
    Finally, once $\Vand_h$ is removed, every vertex in $\Vwire_{f_1} \cup \Vwire_{f_2}$ loses its $\ell$ incident edges from $\Fand_h$, leaving at most $\ell - 1$ edges to its target gadget (by Property~($\star$)), so $\Vwire_{f_1} \cup \Vwire_{f_2}$ $\ell$-unravels as well.
\end{claimproof}

\begin{claim}\label{clm:and-bounds} We have the following:
    \begin{itemize}
        \item For every $v \in \Vand_h$, $d_G(v) < \maxdeg$.
        \item Let $i \in \{1,2\}$ and $H \coloneqq G[\Vand_h \cup \Vwire_{e_i}]$ be an induced subgraph of $G$. Then, for every $v \in \Vand_h$, $d_H(v) \geq 2\ell-1$.
    \end{itemize}
\end{claim}
\begin{claimproof}
    For the upper bound, a vertex in $L_1^h$ has degree at most $|\Vwire_{e_1}| + |\Vwire_{e_2}| + |L_2^h| = (\ell-1) + (\ell-1) + \ell = 3\ell - 2$. A vertex in $L_{\ell_1}^h$ has degree $|L_{\ell_1-1}^h| + (|L_{\ell_1}^h| - 1) = \ell + (\ell - 1) = 2\ell-1$. All other layer degrees are at most $2\ell + 1$. Thus $d_G(v) < \maxdeg$ for all $v \in \Vand_h$.

    For the lower bound in $H$, even if only one input wire $e_i$ is present in $H$, any $v \in L_1^h$ is incident to $|\Vwire_{e_i}| = \ell-1$ edges from $\Vwire_{e_i}$ and $|L_2^h| = \ell$ edges from $L_2^h$, giving $d_H(v) = 2\ell - 1$.
    For $2 \leq j \leq \ell_1-1$, $v \in L_j^h$ receives at least $2\ell - 1$ edges internally from $L^h_{j-1} \cup L^h_{j+1}$, which are all present in $H$.
    Finally, vertices of $L_\ell$ receive $\ell$ edges from $L_{\ell_1-1}$ and $\ell-1$ edges within the layer $L_{\ell_1}$, summing up to $2\ell-1$ neighbors in $H$.
    Hence $d_H(v) \ge 2\ell - 1$ for all $v \in \Vand_h$.
\end{claimproof}

In the description of the~AND-gadget, we assumed that $h$ has two input wires and two output wires.
Let us discuss briefly what changes when this is not the case.
\begin{itemize}
    \item If $h$ has only one outgoing wire $f_1$ we remove $\Vwire_{f_2}$ with adjacent edges from the gadget description: some vertices of $\Vand_h$ will be adjacent to the edges of $\Fand_h$, and some not.
    One can verify that \cref{clm:and-unravel,clm:and-bounds} do not require both $\Vwire_{f_1}$ and $\Vwire_{f_2}$ to be present.
    \item If $h$ has only one incoming wire $e_1$, we omit the set $\Vwire_{e_2}$ from the description of the gadget.
    Again, \cref{clm:and-unravel,clm:and-bounds} still hold for such a simplified gadget.
\end{itemize}
}

\paragraph*{Output gadget.}{
Let $e$ be the output wire of the circuit $C$ and let $r \coloneqq |\cinputs(C)|$ be the total number of inputs to the circuit.
We set $\ell_2 \coloneqq 3\ell r + 3$ and define the vertex set of the output gadget to be
\[
\Vout \coloneqq L^{\text{out}}_1 \uplus L^{\text{out}}_2 \uplus \ldots \uplus L^{\text{out}}_{\ell_2},
\]
as partitioned into $\ell_2$ vertex-disjoint layers whose sizes strictly follow the periodic pattern used in the AND-gadget:
\[
|L^{\text{out}}_i| = \begin{cases}
\ell, & \text{if } i \not\equiv 1 \pmod{3},\\
\ell-1, & \text{if } i \equiv 1 \pmod{3}.
\end{cases}
\]
Note that $|\Vout| \leq \ell_2 \cdot \ell \leq \Oh(\ell^2 r) \leq \Oh(\ell^2 m)$ as desired.
Let $\mathcal{L}_{\text{out}} \coloneqq \bigcup_{j=1}^{\ell r} L^{\text{out}}_{3j}$. We define the edges of the gadget as:
\[
\Eout \coloneqq \left( \Vwire_e \times L^{\text{out}}_1 \right) \cup \left( \bigcup_{i=1}^{\ell_2-1} L^{\text{out}}_i \times L^{\text{out}}_{i+1} \right) \cup \Fout \cup \Kand_{\text{out}}
\]
where $\Kand_{\text{out}}$ is the complete set of edges on $L^{\text{out}}_{\ell_2}$, and $\Fout \subseteq \mathcal{L}_{\text{out}} \times (\bigcup_{p \in \cinputs(C)} \Vin_p)$ is any set of edges satisfying: (1) every vertex of $\mathcal{L}_{\text{out}}$ is incident to exactly one edge of $\Fout$; (2) every vertex of $\Vin_p$ for every input gate $p \in \cinputs(C)$ is incident to exactly $\ell$ edges of $\Fout$.

Observe that Property~($\star$) is preserved by the output gadget: vertices of $\Vwire_e$ are adjacent to $\ell-1$ vertices of $L_1^{\text{out}}$. 
Moreover, the total number of edges required to connect all the vertices of all the sets $\Vin_p$ via $\Fout$ is exactly $r \cdot (\ell \cdot \ell) = \ell^2 r$, which matches exactly the number of vertices in $\mathcal{L}_{\text{out}}$ (since $|\mathcal{L}_{\text{out}}| = \ell r \cdot \ell = \ell^2 r$).
Let us show a few more properties of the output gadget, analogously as we did for AND-gadgets.

\begin{claim}\label{clm:out-unravel}
    Let $G' \coloneqq G - \Vwire_e$. Then the set $\Vout$ $\ell$-unravels in $G'$.
\end{claim}
\begin{claimproof}
    In $G'$, vertices in $L_1^{\text{out}}$ have lost all edges to $\Vwire_e$. They are adjacent only to $L_2^{\text{out}}$ ($|L_2^{\text{out}}| = \ell$), so $d_{G'}(v) = \ell$ for all $v \in L_1^{\text{out}}$. Thus $L_1^{\text{out}}$ can be pruned.
    Once $L_1^{\text{out}}$ is removed, every vertex of $L_2^{\text{out}}$ loses $|L_1^{\text{out}}| = \ell-1$ incident edges, leaving only edges to $L_3^{\text{out}}$ ($|L_3^{\text{out}}| = \ell$). So their degree drops to $\ell$, allowing $L_2^{\text{out}}$ to be pruned.
    Next, vertices in $L_3^{\text{out}}$ lose their edges to $L_2^{\text{out}}$, leaving edges to $L_4^{\text{out}}$ ($|L_4^{\text{out}}| = \ell-1$) and $1$ edge in $\Fout$, giving a total degree of at most $(\ell-1)+1=\ell$, so $L_3^{\text{out}}$ can be pruned.
    Inductively, all layers $L_1^{\text{out}}, L_2^{\text{out}}, \dots, L_{\ell_2-1}^{\text{out}}$ are pruned sequentially.
    Then, vertices of $L_{\ell_2}^{\text{out}}$ have degree $\ell-1$ each, so the set $L_{\ell_2}^{\text{out}}$ can be removed from the graph.
\end{claimproof}

\begin{claim}\label{clm:out-bounds} We have the following:
    \begin{itemize}
        \item For every $v \in \Vout$, $d_G(v) < \maxdeg$.
        \item Let $H \coloneqq G[\Vout \cup \Vwire_e]$ be an induced subgraph of $G$. Then, for every $v \in \Vout$, $d_H(v) \geq 2\ell-1$.
    \end{itemize}
\end{claim}
\begin{claimproof}
    For the upper bound, a vertex in $L_1^{\text{out}}$ has degree at most $|\Vwire_e| + |L_2^{\text{out}}| = (\ell-1) + \ell = 2\ell-1$. A vertex in $L_{\ell_2}^{\text{out}}$ has degree $|L_{\ell_2-1}^{\text{out}}| + (|L_{\ell_2}^{\text{out}}| - 1) = \ell + (\ell - 1) = 2\ell-1$. All other layer degrees are at most $2\ell + 1 < \maxdeg$. Thus $d_G(v) < \maxdeg$ for all $v \in \Vout$.

    For the lower bound in $H$, assuming the incoming wire $e$ is present in $H$, every vertex $v \in L_1^{\text{out}}$ retains $|\Vwire_e| = \ell - 1$ edges to the incoming wire and $|L_2^{\text{out}}| = \ell$ edges to $L_2^{\text{out}}$, giving $d_H(v) = 2\ell - 1$. For $2 \le i \le \ell_2-1$, $v \in L_i^{\text{out}}$ receives at least $2\ell - 1$ edges internally from $L^{\text{out}}_{i-1} \cup L^{\text{out}}_{i+1}$, which are all present in $H$.
    Finally, vertices of $L_{\ell_2}^{\text{out}}$ receive $\ell$ edges from $L_{\ell_2-1}^{\text{out}}$ and $\ell-1$ edges within the layer $L_{\ell_2}^{\text{out}}$, summing up to $2\ell-1$ neighbors in $H$.
    Hence $d_H(v) \ge 2\ell - 1$ for all $v \in \Vout$.
\end{claimproof}
}

This completes the construction of the graph $G_{C,x}$. We now analyze the necessary properties.

\paragraph*{Graph size.}{
Let $m$ denote the size of the circuit $C$ (the total number of gates and wires). We now bound the size of the constructed graph $G_{C,x}$. The graph contains $\Oh(m)$ wire sets, each of size $\ell-1$, contributing $\Oh(m\ell)$ vertices. Each OR-gadget contains $\Oh(\ell^2)$ vertices, and each AND-gadget consists of $6\ell-3$ layers of size at most $\ell$, contributing $\Oh(\ell^2)$ vertices per gate. The output gadget aggregates back-edges to $r \le m$ input gadgets and contains $\ell_2 = 3\ell r + 3$ layers, adding $\Oh(\ell^2 r) \leq \Oh(\ell^2 m)$ vertices. Summing these up, the total number of vertices is $|V(G_{C,x})| \leq \Oh(\ell^2 m)$. 

By Property~($\star$) and \cref{clm:input-bounds,clm:or-bounds,clm:and-bounds,clm:out-bounds} every vertex of $G$ has degree less than $\maxdeg$. Consequently, the total number of edges is $|E(G_{C,x})| \leq \Oh(\ell^3 m)$.
}

\paragraph*{Initialization and updates.}{Given a circuit $C$ and the initial input vector $x_0$ the graph $G_{C,x_0}$ can be constructed layer by layer (following topological ordering of $C$) in time proportional to the graph size, which is $\Oh(\ell^3 m)$.

When the input assignment $x$ changes (e.g., a single variable $p$ is flipped), we only update the corresponding input gadget in $G_{C,x}$. This local update consists exclusively of modifying the edge set $\Ein_p$: inserting or removing exactly $\ell(\ell-1)$ edges between $\Vin_p$ and $\Vwire_f$. Crucially, the internal structure of the logic gates, wires, and output gadget remains fully static; hence, updates take strictly $\Oh(\ell^2)$ time per modified input.
}

\paragraph*{Circuit evaluation vs degeneracy.}{
It remains to analyze the relationship is between the evaluation of circuit $C$ on $x$ and the degeneracy of the graph $G = G_{C,x}$.
\cref{clm:eval-one-low-deg,clm:eval-zero-high-deg} below establish the desired inequalities.
\begin{claim}\label{clm:eval-one-low-deg}
    Assume that $C(x) = 1$. Then, $\deg(G) \leq \ell$.
\end{claim}
\begin{claimproof}
    We construct a valid $\ell$-degeneracy ordering for $G$, demonstrating that the graph can be reduced to an empty graph by $\ell$-unraveling, that is, sequentially removing vertices of current degree at most $\ell$. The unraveling proceeds inductively from the input gadgets to the output gadget. 

    Since $C(x) = 1$, the Boolean evaluation assigns $1$ to a set of wires and gates propagating from the inputs to the output.
    For every input $p$ with $x(p) = 1$, the corresponding set $\Vwire_f$ can be unraveled by \cref{clm:input-unravel}. 
    For any OR-gate evaluating to $1$, at least one input wire evaluates to $1$ and thus is unraveled, which drops the degrees in the OR-gadget and allows its output wire gadget to be unraveled by \cref{clm:or-unravel}. 
    For any AND-gate evaluating to $1$, both input wires evaluate to $1$ and are unraveled, triggering the sequential removal of its output wires by \cref{clm:and-unravel}. 
    Because $C(x) = 1$, the output wire $e$ evaluates to $1$ and is unraveled. Without the support of this incoming wire, the output gadget $\Vout$ can be unraveled layer by layer by \cref{clm:out-unravel}.

    Once $\Vout$ is completely removed, the remaining graph $G' = G - \Vout$ collapses. For every input $p$ with outgoing wire $f$ such that $x(p) = 0$, the set $\Vin_p \cup \Vwire_f$ can be unraveled in $G'$ by \cref{clm:input-unravel}. This recursively drops the incoming degrees for all remaining $0$-evaluating OR and AND gadgets, allowing them (and their outgoing wires) to be sequentially unraveled. Consequently, all vertices in $G$ are successfully eliminated, proving that $\deg(G) \le \ell$.
\end{claimproof}

\begin{claim}\label{clm:eval-zero-high-deg}
    Assume that $C(x) = 0$. Then, $2\ell -1 \leq \deg(G) \leq \maxdeg$.
\end{claim}
\begin{claimproof}
    The upper bound $\deg(G) \le \maxdeg$ follows directly from the degree bounds proved in Property~($\star$) and \cref{clm:input-bounds,clm:or-bounds,clm:and-bounds,clm:out-bounds}.
    For the lower bound, it suffices to exhibit an induced subgraph $H \subseteq G$ with minimum degree $\delta(H) \ge 2\ell-1$. We construct $H$ by tracing a ``justifying tree'' of $0$-evaluations backward from the output gate. We let $H \coloneqq G[U]$, and define the set $U \subseteq V(G)$ by following the rules below recursively:
    \begin{enumerate}
        \item Include the entire output gadget $\Vout$ in $U$.
        \item Since $C(x)=0$, the output wire $f$ evaluates to $0$. Include $\Vwire_f$ in $U$.
        \item For any set $\Vwire_e$ already included in $U$, let $h$ be its source gate. Include the gadget of $h$ in $U$ ($\Vor_h$, $\Vand_h$, or $\Vin_h$).
        \item Depending on the type of the source gate $h$ (which must evaluate to $0$ since its outgoing wire $e$ evaluates to $0$):
        \begin{itemize}
            \item If $h$ is an OR-gate, both of its input wires $e_1$ and $e_2$ evaluate to $0$. Include both sets $\Vwire_{e_1}$ and $\Vwire_{e_2}$ in $U$.
            \item If $h$ is an AND-gate, at least one of its input wires $e$ evaluates to $0$. Pick exactly one such $0$-evaluating input wire and include $\Vwire_e$ in $U$.
            \item If $h$ is an input gate $p$, include the set $\Vin_p$ in $U$. Note that $x(p)=0$ must hold as $e$ evaluates to $0$.
        \end{itemize}
    \end{enumerate}
    Let $H = G[U]$ be the subgraph of $G$ induced by all the vertices chosen in this process. We now verify that $\delta(H) \ge 2\ell-1$.

    By construction, every wire $e$ included in $H$ has both its source gadget and its target gadget in~$H$. By Property ($\star$), its vertices maintain exactly $\ell$ edges from the source and $\ell-1$ edges to the target, giving them a degree of $2\ell-1$ in $H$.
    For every included OR-gate evaluating to $0$, both of its incoming wires are in $H$, satisfying the conditions of \cref{clm:or-bounds}.
    For every included AND-gate evaluating to~$0$, exactly one of its incoming wires is in $H$, which by \cref{clm:and-bounds} is sufficient to maintain a degree of at least $2\ell-1$.
    For every included input gate $p$, $x(p)=0$ holds. The vertices of $\Vin_p$ maintain $\ell-1$ edges to their outgoing wire set and $\ell$ edges to $\Vout$ (since $\Vout \subseteq H$), guaranteeing a degree of $2\ell-1$ by \cref{clm:input-bounds}.
    Finally, if $f$ is the output wire, the set $\Vwire_f$ is in $H$, which is sufficient for every vertex in $\Vout$ to maintain a degree of at least $2\ell-1$ by \cref{clm:out-bounds}. 
    So $2\ell-1 \le \delta(H) \le \deg(G)$.
\end{claimproof}
}

We have proved that if $C(x) = 1$ then $\deg(G) \leq \ell$, and if $C(x) = 0$ then $2\ell-1 \leq \deg(G) \leq \maxdeg$.
However, for $\ell \geq 2$, the intervals $[0, \ell]$ and $[2\ell-1, \maxdeg]$ are disjoint, hence if $\deg(G) \leq \ell$ then it must hold that $C(x) = 1$, and similarly if $2\ell-1 \leq \deg(G) \leq \maxdeg$ then $C(x) = 0$.
}
\end{proof}


Having proved \cref{lem:circ-to-deg}, we are ready to conclude the main result of this work.

\begin{proof}[Proof of \cref{thm:main}]
Suppose that for some $\eps, \delta > 0$, $D, k \in \N$, and a function $f \colon \N \to \N$, the problem of \dyndeg{\eps}{D} admits a data structure $\Ddeg$ with initialization time $f(d_\mx)\cdot n^k$ and amortized update time $f(d_\mx)\cdot n^{1-\delta}$ where $n$ is the number of vertices of the input graph. We may assume without loss of generality that $f$ is non-decreasing.

Using $\Ddeg$ as an auxiliary data structure, we show that $\dyncir$ restricted to monotone $(2,2)$-circuits admits a data structure $\Dcirc$ with initialization time $\Oh(m^{k'})$ and amortized update time $\Oh(m^{1-\delta})$, where $k' \coloneqq \max(k, 2)$ and $m$ is the size of the input circuit.
This will contradict \cref{lem:circuit-lower-bound}.

Let $\ell \coloneqq 1 + \lceil \frac{D+1}{\eps} \rceil \geq 2$ and $d_\mx \coloneqq \maxdeg$. Note that $\ell, d_\mx \leq \Oh(1)$ as $D$ and $\eps$ are universal constants here.
During the run, the data structure $\Dcirc$ maintains:
\begin{itemize}
\item a single instance $\Dcirctodeg_\ell$ of the data structure from \cref{lem:circ-to-deg},
\item a single instance $\Ddeg$ of the data structure from the assumption.
\end{itemize}

Given a monotone Boolean $(2,2)$-circuit $C$ and a dynamic input vector $x$ we are going to plug the pair $(C, x)$ into the data structure $\Dcirctodeg_\ell$.
The data structure $\Dcirctodeg_\ell$ returns a corresponding graph $G_{C, x}$ which is going to serve as an input (together with the value of $d_\mx$) to the data structure $\Ddeg$.
Observe that \cref{lem:circ-to-deg} guarantees that $\deg(G_{C,x}) \leq \maxdeg = d_\mx$ at all times.

First, let us show that using these data structures as presented, we can efficiently evaluate the circuit $C$ on the vector~$x$.

\begin{claim}
After initialization and every update, the data structure $\Dcirc$ can compute the value of $C(x)$ in time $\Oh(1)$.
\end{claim}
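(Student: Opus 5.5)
The plan is to query $\Ddeg$ for its current answer $d$, which by assumption satisfies $\deg(G_{C,x}) \le d \le (2-\eps)\deg(G_{C,x}) + D$. We then output $C(x)=1$ if $d \le (2-\eps)\ell + D$ and $C(x)=0$ otherwise. This takes $\Oh(1)$ time, since $\ell$, $\eps$ and $D$ are constants, so the threshold can be computed once. The only legitimate input to $\Ddeg$ is a graph with degeneracy at most $d_\mx$, and \cref{lem:circ-to-deg} guarantees $\deg(G_{C,x})\le \maxdeg = d_\mx$ at all times. Hence the returned $d$ is always valid.

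For correctness we use the dichotomy from \cref{lem:circ-to-deg}.

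If $C(x)=1$, then $\deg(G_{C,x})\le \ell$. By monotonicity of the bound, $d \le (2-\eps)\ell + D$, so we answer $1$.

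If $C(x)=0$, then $\deg(G_{C,x}) \ge 2\ell-1$, so $d \ge 2\ell - 1$. It therefore suffices to check the strict inequality
\[
2\ell - 1 > (2-\eps)\ell + D, \qquad\text{that is,}\qquad \eps\ell > D+1 .
\]
With $\ell = 1 + \lceil (D+1)/\eps\rceil$ we have $\ell > (D+1)/\eps$, so $\eps\ell > D+1$ holds. Thus in this case $d$ exceeds the threshold and we answer $0$.

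The main obstacle is just this threshold arithmetic; this is exactly why $\ell$ was chosen in terms of $\eps$ and $D$. The only other point to confirm is that "after initialization and every update" is meaningful. After $\oper{init}$, the graph $G_{C,x_0}$ has been fed into $\Ddeg$. After each $\oper{flip}$, all $\Oh(\ell^2)$ edge updates have been fed into $\Ddeg$. So the current answer of $\Ddeg$ refers to the current graph $G_{C,x}$, and reading it costs constant time.
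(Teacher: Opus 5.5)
Your proposal is correct and follows the paper's argument: both combine the dichotomy $\deg(G_{C,x}) \le \ell$ versus $\deg(G_{C,x}) \ge 2\ell-1$ from \cref{lem:circ-to-deg} with the inequality $\eps\ell > D+1$ guaranteed by the choice of $\ell$. The only cosmetic difference is that the paper compares $d$ against the integer $2\ell-1$ rather than $(2-\eps)\ell+D$; this works equally well, since $(2-\eps)\ell + D < 2\ell-1$.
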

\begin{claimproof}
Let $G_{C, x}$ be the current graph returned by the data structure $\Dcirctodeg_\ell$.
By the definition, the data structure $\Ddeg$ runs on the graph $G_{C, x}$ and maintains an integer $d$ satisfying the condition
\[
    \deg(G_{C, x}) \leq d \leq (2-\eps) \deg(G_{C, x}) + D.
\]
It is enough to show that based on the value of $d$, we can deduce whether $C(x) = 0$ or $C(x) = 1$.
By \cref{lem:circ-to-deg} we know that $C(x) = 0$ is equivalent to $\deg(G_{C, x}) \geq 2\ell - 1$, and then $d \geq \deg(G_{C, x}) \geq 2\ell - 1$. 
On the other hand, $C(x) = 1$ is equivalent to $\deg(G_{C, x}) \leq \ell$, and then
\[
    d \leq (2-\eps) \deg(G_{C, x}) + D \leq (2-\eps) \cdot \ell + D < \left(2 - \frac{D+1}{\ell}\right) \cdot \ell + D = 2 \ell - 1.
\]
Hence, by comparing the values of $d$ and $2 \ell - 1$ we can evaluate $C(x)$.
\end{claimproof}

It remains to be discussed what is the total time needed to initialize and update the data structures $\Dcirctodeg_\ell$ and $\Ddeg$.

\paragraph*{Initialization.}{
The data structure $\Dcirc$ receives as input a monotone Boolean circuit $C$ and an initial input vector $x_0$.
Let $m$ be the number of gates in the circuit $C$.
First, we initialize our inner structure $\Dcirctodeg_\ell$ with the pair $(C, x_0)$; this returns the graph $G_{C, x_0}$.
By \cref{lem:circ-to-deg} this takes time $\Oh(\ell^3 \cdot m) \leq \Oh_{\eps,D}(m)$.
Let $n = |V(G_{C, x_0})| \leq \Oh(\ell^2 \cdot m)$, and recall that $|E(G_{C, x_0})| \leq \Oh(\ell^3 \cdot m)$.

Next, we need to provide the graph $G_{C,x}$ to the structure $\Ddeg$.
We achieve this in two steps. First, initialize the structure $\Ddeg$ with an empty graph on $n$ vertices (together with the value of $d_\mx$) which takes time
\[
f(d_\mx) \cdot n^k \leq f(\maxdeg) \cdot \Oh\left((\ell^2 \cdot m)^k\right) \leq \Oh_{\eps,D,k}\left(m^k\right).
\]
Second, provide edges of $E(G_{C, x_0})$ as a sequence of $\Oh(\ell^3 \cdot m)$ edge updates to $\Ddeg$. This takes time
\[
\Oh(\ell^3 \cdot m) \cdot f(d_\mx) \cdot n^{1-\delta} = \Oh(\ell^3 \cdot m) \cdot f(\maxdeg) \cdot \Oh\left((\ell^2 \cdot m)^{1-\delta}\right) \leq \Oh_{\eps, D, \delta}\left(m^{2-\delta}\right).
\]
Hence, the total initialization time can be upper bounded by $\Oh_{\eps,D,k,\delta}(m^k + m^{2-\delta}) \leq \Oh(m^{k'})$, since $\eps$, $D$, $k$ and $\delta$ are universal constants.
}
\paragraph*{Update.}{
Consider an update of the input vector $x$ which is flipping the $i$-th bit of $x$ for some $i \in \N$, and let $x'$ denote the new input vector.
First, we need to run $\oper{flip}(i)$ on the data structure $\Dcirctodeg_\ell$ which returns the graph $G_{C,x'}$ as a sequence of $\Oh(\ell^2)$ edge updates to $G_{C,x}$.
This step runs in time $\Oh(\ell^2) \leq \Oh_{\eps,D}(1)$.
Next, all these $\Oh(\ell^2)$ edge updates need to be relayed to the data structure $\Ddeg$, which takes total amortized time
\[
\Oh(\ell^2) \cdot f(d_\mx) \cdot n^{1-\delta} = \Oh(\ell^2) \cdot f(\maxdeg) \cdot \Oh\left((\ell^2 \cdot m)^{1-\delta}\right) \leq \Oh_{\eps,D,\delta}\left(m^{1-\delta}\right).
\]
Therefore, the total amortized update time is $\Oh_{\eps,D,\delta}(m^{1-\delta}) = \Oh(m^{1-\delta})$, as $\eps$, $D$ and $\delta$ are universal constants.
}

Summing up, we proved that the assumed data structure $\Dcirc$ can be used to solve the problem $\dyncir$ restricted to monotone $(2,2)$-circuits with initialization time $\Oh(m^{k'})$ and amortized update time $\Oh(m^{1-\delta})$.
This is a contradiction with \cref{lem:circuit-lower-bound}, and ends the proof of our main theorem.
\end{proof}

\section*{AI disclosure}
\label{sec:ai-disclosure}

During the preparation of this work, the authors used Gemini 3.6 Flash to refine the language and style, assist with proof verification, and generate LaTeX code for \cref{fig:example-circuit,fig:example-ov-to-circ,fig:or-gadget,fig:and-gadget}. The theoretical concepts and core proofs were entirely developed by the authors. All outputs were thoroughly reviewed, checked, and edited by the authors, who take full responsibility for the contents of this paper.

\bibliography{auxiliary/references}

\end{document}